\def\eqref#1{equation~\ref{#1}}
\def\1{\bm{1}}
\DeclareMathAlphabet{\mathsfit}{\encodingdefault}{\sfdefault}{m}{sl}
\SetMathAlphabet{\mathsfit}{bold}{\encodingdefault}{\sfdefault}{bx}{n}
\newcommand{\softmax}{\mathrm{softmax}}
\title{Transformer Encoder Satisfiability: Complexity and Impact on Formal Reasoning}
\author{Marco S\"alzer \\
			University of Kaiserslautern-Landau, RPTU \\
			Kaiserslautern, Germany\\
	\texttt{marco.saelzer@rptu.de} 
	\And
	Eric Alsmann \\
	Theoretical Computer Science / Formal Methods\\
	University of Kassel, \ 
	Germany\\
	\texttt{eric.alsmann@uni-kassel.de} 
	\And
	Martin Lange \\
	Theoretical Computer Science / Formal Methods\\
	University of Kassel, \ 
	Germany\\
	\texttt{martin.lange@uni-kassel.de} 
}
\begin{document}

\newcommand*{\nats}{\mathbb{N}}
\newcommand*{\rats}{\mathbb{Q}}
\newcommand*{\reals}{\mathbb{R}}
\newcommand*{\bs}[1]{\boldsymbol{#1}}
\newcommand*{\stack}{\mathit{stack}}
\newcommand*{\hmax}{\mathrm{hardmax}}
\newcommand*{\relu}{\mathit{relu}}
\newcommand*{\scalar}[1]{\langle #1\rangle}
\newcommand{\countt}[2]{|#2|_{#1}}

\newcommand{\len}{\mathit{len}}

\newcommand*{\fixedfracrats}{\mathbb{Q}^\infty_f}
\newcommand*{\fpaset}{\mathbb{Q}^b_f}
\newcommand*{\satu}{\mathit{satu}}
\newcommand*{\wrap}{\mathit{wrap}}
\newcommand*{\rup}{{\uparrow}}

\newcommand*{\downsatu}{F^\downarrow_\satu}
\newcommand*{\downwrap}{F^\downarrow_\wrap}
\newcommand*{\upwrap}{F^\uparrow_\wrap}
\newcommand{\upsatu}{F^\uparrow_\satu}

\newcommand*{\rdown}[1]{\lfloor #1 \rfloor}
\newcommand*{\pwlfnn}{\mathcal{N}(\relu)}
\newcommand*{\gad}[1]{\langle #1\rangle}
\newcommand*{\para}{|\!|}

\newcommand*{\emb}{\mathit{emb}}
\newcommand*{\pos}{\mathit{pos}}
\newcommand*{\att}{\mathit{att}}
\newcommand*{\score}{\mathit{score}}
\newcommand*{\norm}{\mathit{norm}}
\newcommand*{\pool}{\mathit{pool}}
\newcommand*{\comb}{\mathit{comb}}
\newcommand*{\out}{\mathit{out}}

\newcommand*{\transC}{\mathcal{T}}
\newcommand*{\transCundec}{\mathcal{T}_\mathit{udec}}
\newcommand*{\logtransCundec}{\mathcal{T}^\textsc{log}_\mathit{udec}}
\newcommand*{\transCfix}{\mathcal{T}^\textsc{fix}}
\newcommand*{\periodTransCfix}{\mathcal{T}^\textsc{fix}_\circ}

\newcommand*{\empt}{\textsc{trSat}}
\newcommand*{\bempt}{\textsc{btrSat}}
\newcommand*{\pcp}{\textsc{Pcp}}
\newcommand*{\unbOTP}{\ensuremath{\textsc{OTP}^*}\xspace}
\newcommand*{\bOTP}{\ensuremath{\textsc{OTP}}\xspace}
\newcommand*{\unbOTWP}{\ensuremath{\textsc{OTWP}^*}\xspace}
\newcommand*{\bOTWP}{\ensuremath{\textsc{OTWP}}\xspace}
\newcommand*{\expbOTWP}{\ensuremath{\textsc{OTWP}^{\mathsf{exp}}}\xspace}
\newcommand*{\expbSQTP}{\ensuremath{\textsc{SQTP}^{\mathsf{exp}}}\xspace}
\newcommand*{\polybOTWP}{\ensuremath{\textsc{OTWP}^{\mathsf{poly}}}\xspace}

\newcommand{\size}{\mathit{size}}
\newcommand{\poly}{\mathit{poly}}
\newcommand{\NEXPTIME}{\text{NEXPTIME}\xspace}
\newcommand*{\ntime}[1]{\textsc{N$#1$-TIME}}

\newtheorem{theorem}{Theorem}
\newtheorem{lemma}{Lemma}
\newtheorem{corollary}{Corollary}


\newcommand{\TODOcomment}[2]{%
  \stepcounter{TODOcounter#1}%
  {\scriptsize\bf$^{(\arabic{TODOcounter#1})}$}%
  \marginpar[\fbox{
    \parbox{2cm}{\raggedleft
      \scriptsize$^{({\bf{\arabic{TODOcounter#1}{#1}}})}$%
      \scriptsize #2}}]%
  {\fbox{\parbox{2cm}{\raggedright
      \scriptsize$^{({\bf{\arabic{TODOcounter#1}{#1}}})}$%
      \scriptsize #2}}}
}%

\newcommand{\simpleTODOcomment}[2]{%
  \stepcounter{TODOcounter#1}%
  {\bf
    \scriptsize({\arabic{TODOcounter#1}~{#1}})
    {\bfseries{TODO:} #2}
  }
}

\newcounter{TODOcounter}
\newcommand{\TODO}[1]{\TODOcomment{}{#1}}
\newcommand{\TODOX}[1]{\simpleTODOcomment{}{#1}}

\maketitle

\begin{abstract}
We analyse the complexity of the satisfiability problem, or similarly feasibility problem, 
(trSAT) for transformer encoders (TE), which naturally occurs in formal verification or interpretation, collectively referred to as formal reasoning. 
We find that trSAT is undecidable when considering TE as they are commonly studied in the expressiveness community. Furthermore, we identify practical 
scenarios where trSAT is decidable and establish corresponding complexity bounds. Beyond trivial cases, we find that quantized TE, those restricted by 
fixed-width arithmetic, lead to the decidability of trSAT due to their limited attention capabilities. However, the problem remains difficult, as we 
establish scenarios where trSAT is NEXPTIME-hard and others where it is solvable in NEXPTIME for quantized TE. To complement our complexity results, 
we place our findings and their implications in the broader context of formal reasoning.
\end{abstract}

\section{Introduction}
\label{sec:intro}

Natural language processing (NLP) models, processing and computing human language, 
are gateways for modern applications aiming to interact with human users in a natural way. 
Although NLP is a traditional field of research, the use of deep learning techniques has 
undoubtedly revolutionised the field in recent years (\cite{OtterMK21}).  
In this revolution, models such as Recurrent Neural Networks (RNN) or more specific Long 
Short-term Memory Networks (LSTM) (\cite{YuSHZ19})
 have long been the driving force, but for a few years 
now NLP has a new figurehead: \emph{transformers} (\cite{VaswaniSPUJGKP17}).

Transformers are a deep learning model using (multiple) self-attention mechanisms to
process sequential input data, usually natural language. The efficient trainability 
of transformers, for example in contrast to LSTM, while achieving top-tier performance 
led to numerous heavy-impact implementations such as BERT (\cite{DevlinCLT19}), GPT-3 
(\cite{BrownMRSKDNSSAA20}) or GPT-4 (\cite{openai2023gpt4}), sparking
widespread use of the transformer architecture. However, the foreseeable omnipresence 
of transformer-based applications leads to serious security concerns.

In general, there are two approaches to establishing trustworthiness of learning-based models: 
first, certifying specific, application-dependent safety properties, called
\emph{verification}, and second, interpreting the behaviour of such models 
and giving explanations for it, called \emph{interpretation}. In both approaches, 
the holy grail is to develop automatic methods that are \emph{sound and complete}: 
algorithm $A$ given some model $T$ and (verification or interpretation) specification $\varphi$ outputs 
\texttt{true} if $T$ satisfies $\varphi$ (soundness) and for every given pair $T, \varphi$ where $T$ satisfies
$\varphi$ algorithm $A$ outputs \texttt{true} (completeness).
We refer to such sound and complete methods and tasks for verification and interpretation collectively using the term \emph{formal reasoning}.

We lay out a framework for the possibilities and challenges of formal reasoning for
transformers by establishing fundamental complexity (and computability) results in this work. 
Thereby, we focus on the so-called \emph{satisfiability ($\empt$) problem} of 
sequence-classifying transformers: 
given a transformer $T$, decide whether there is some input word $w$ such that  $T(w) = 1$, which
can be interpreted as \emph{$T$ accepts $w$}.
Although this may seem like an artificial problem at first glance, it is a natural abstraction of 
problems that commonly occur in almost all non-trivial formal reasoning tasks. Additionally,
since it is detached from the specifics of particular reasoning specifications like safety
properties for instance, uncomputability results and complexity-theoretic hardness results immediately 
transfer to more complex formal reasoning tasks. This also keeps the focus on the transformer 
architecture under consideration. Here, we exclusively consider \emph{transformer encoders} (TE), which are encoder-only.
This is mainly due to the fact that the known high expressive power of encoder-decoder transformers (\cite{PerezBM21}) 
makes formal reasoning trivially impossible. 

Our work is structured as follows. We define necessary preliminaries in Section~\ref{sec:prelim}. 
In Section~\ref{sec:overview}, we give an overview on our complexity results and take a 
comprehensive look at their implications for formal reasoning for transformers. 
In Section~\ref{sec:undec} and Section~\ref{sec:dec} we present our theoretical results: 
we show that $\empt$ is undecidable for classes of TE commonly considered
in research on transformer expressiveness, we show that a bounded version $\bempt$ of the satisfiability problem 
is decidable, for any class of (computable) TE, and give corresponding complexity bounds 
and we show that considering quantized TE, meaning TE 
whose parameters and computations are carried out in a fixed-width arithmetic,
leads to decidability of $\empt$ and give corresponding complexity bounds. Finally, we discuss 
limitations, open problems and future research in Section~\ref{sec:outlook}.


\paragraph{Related work.}
We establish basic computability and complexity results
about transformer-related formal reasoning problems, like formal verification or interpretation. 
This places our work in the intersection between research on \emph{verification and interpretation of transformers}
and \emph{transformer expressiveness}.

There is a limited amount of work concerned with methods for the verification of safety properties of transformers 
(\cite{HsiehCJWHH19,ShiZCHH20,bonaert_fast_2021, DongLJ021}). 
However, all those methods do not fall in the category of formal reasoning, as they are non-complete. 
This means, the rigorous computability and complexity bound established in this work cannot be applied without further considerations. 
The same applies for so far considered interpretability methods (\cite{ZhaoCYLDCWYD24_explainsurvey}). 
We remark that a lot of these approaches are not sound methods either.

In contrast, there is an uprise in theoretical investigations of transformer expressiveness. Initial work dealt with encoder-decoder models 
and showed that such models are Turing-complete (\cite{PerezBM21,bhattamishra-etal-2020-computational}). Note that these are 
different models than the ones we consider, which are encoder-only. Encoder-only models have 
so far been analysed in connection with circuit complexity (\cite{Hahn20, HaoAF22,merrill-etal-2022-saturated,merrill-sabharwal-2023-parallelism}), 
logics (\cite{chiang_tighter_2023,MerrillS23}) and programming languages (\cite{weiss_thinking_2021}). A recently published survey (\cite{10.1162/tacl_a_00663}) 
provides an overview of these results. This work is adjacent as some of the here considered classes of TE, mainly those considered in
Section~\ref{sec:undec}, are motivated by these results and some of the constructions we use in corresponding proofs are similar.

\section{Fundamentals}
\label{sec:prelim}

\paragraph{Mathematical basics.} 
Let $\Sigma$ be a finite set of symbols, called \emph{alphabet}. A \emph{(finite) word $w$ over $\Sigma$} is a finite sequence 
$a_1 \dotsb a_k$ where $a_i \in \Sigma$. We define $|w| = k$. 
As usual, we denote the set of all non-empty words by $\Sigma^+$.  A \emph{language} is a set of words. 
We also extend the notion of an alphabet to vectors $\bs{x}_i \in \reals^d$, meaning that a sequence $\bs{x}_1 \dotsb \bs{x}_k$
is a word over some subset of $\reals^d$.
Usually, we denote vectors using bold symbols like $\bs{x}, \bs{y}$ or $\bs{z}$. 

\paragraph{Transformer encoders (TE).}
We consider the transformer encoders (TE), based on the transformer architecture originally introduced in (\cite{VaswaniSPUJGKP17}).
We take a look at TE from a computability and complexity perspective, making a formal definition of the considered architecture necessary. 
Thereby, we follow the lines of works concerned with formal aspects of transformers like (\cite{Hahn20,PerezBM21,HaoAF22}). From a syntax point of view, 
our definition is most near to (\cite{HaoAF22}).\footnote{As of now, no single definition of Transformer encoders (TE) has been universally adopted in research on their formal aspects, particularly concerning syntax (see the recent survey (\cite{10.1162/tacl_a_00663}) for an overview of different notions of TE). The definition we use here is sufficiently general and provides a parameterized template for the classes of TE considered in our main results.} 

An \emph{TE $T$ with $L$ layers and $h_i$ attention heads in layer $i$} is a tuple 
$(\emb, \{ \att_{i,j} \mid 1 \leq i \leq L, 1 \leq j \leq h_i\}, \{\comb_i \mid 1 \leq i \leq L\}, \out)$ where
\begin{itemize}
	\item $\emb \colon \Sigma \times \nats \rightarrow \reals^{d_0}$ for some $d_0 \in \nats$ is the \emph{positional embedding},
	\item each \emph{attention head} is a tuple  $\att_{i,j} = (\score_{i,j}, \pool_{i,j})$ where
	$\score_{i,j} \colon \reals^{d_{i-1}} \times \reals^{d_{i-1}} \rightarrow \reals$ is a function called \emph{scoring} and 
	$\pool_{i,j} \colon (\reals^{d_{i-1}})^+ \times \reals^+ \rightarrow \reals^{d_{i}}$ is a function called \emph{pooling}, 
	computing  $(\bs{x}_1, \dotsc, \bs{x}_n, s_1, \dotsc, s_n) \mapsto 
	\sum_{i'=1}^n \norm(i', s_1, \dotsc, s_n) (W\bs{x}_{i'})$ 
	where $W$ is a linear map represented by a matrix and $\norm\colon \nats \times \reals^+ \rightarrow \reals$ 
	is a \emph{normalisation},
	\item each $\comb_i \colon \reals^{d_{i,1}}\times \dotsb \times \reals^{d_{i,h_1+1}} \rightarrow \reals^{d_i}$ is called a \emph{combination} 
	and $\out \colon \reals^{d_L} \rightarrow \reals$ is called the \emph{output}.
\end{itemize}
For given $i \leq L$ we call the tuple $(\att_{i,1}, \dotsc, \att_{i,h_i}, \comb_i)$ the \emph{$i$-th layer} of $T$. The TE 
$T$ computes a function $\Sigma^+ \rightarrow \reals$ as follows, also schematically depicted in Figure~\ref{fig:eot_scheme}.
Let $w = a_1, \dotsc, a_n \in \Sigma^+$ be a word. First, $T$ computes an embedding of $w$ by
$\emb(w) = \bs{x}_1^0\dotsb \bs{x}_n^0$ where $\bs{x}^0_i = \emb(a_i, i)$.
Next, each layer $1 \leq i \leq L$ computes a sequence $\bs{x}_1^i \dotsb \bs{x}_n^i$ as follows:
for each input $\bs{x}_m^{i-1}$ and attention head $\att_{i,j}$,
layer $i$ computes $\bs{y}_{m,j}^i = \pool_{i,j}(\bs{x}_1^{i-1}, \dotsc, \bs{x}_n^{i-1}, 
\score_{i,j}(\bs{x}_m^{i-1}, \bs{x}_1^{i-1}), \dotsc, \score_{i,j}(\bs{x}_m^{i-1}, \bs{x}_n^{i-1}))$. 
Then, $\bs{x}^i_m$ is given by $\comb_i(\bs{x}^{i-1}_m, \bs{y}_{m,1}^i, \dotsc, \bs{y}_{m,h_i}^i)$. In the end, the output
$T(w)$ is computed by $\out(\bs{x}^k_n)$, thus the value of the output function for the last symbol of
$w$ after being transformed by the embedding and $L$ layers of $T$. We say that $T$ \emph{accepts}
$w$ if $T(w) = 1$, and we say that $T$ \emph{rejects} $w$ otherwise.  
We call $L$ the \emph{depth} of $T$ and the maximal $h_i$ the \emph{(maximum) width} of $T$. Furthermore,
we call the maximal $d_i$ the \emph{(maximum) dimensionality} of $T$.
Let $\transC,\transC'$ be some classes of TE. We sometimes say that \emph{$\transC'$ is at least as expressive as
$\transC$} or \emph{$\transC$ is at most as expressive as
$\transC'$}, meaning that for each $T \in \transC$ there is $T' \in \transC'$ such that $T$ and $T'$ compute
the same function.
The decision problem $\empt[\transC]$ is given $T \in \transC$ over alphabet $\Sigma$, decide whether there is $w \in \Sigma^+$ such that $T(w) = 1$. We refer to this as the \emph{satisfiability problem} of $\transC$. \footnote{We observe that we can equivalently define \empt{} as requiring $T(w) \geq c$ for arbitrary $c \in \rats$ without changing any of the results detailed in this work.}
\begin{figure}
	\centering
	\input{assets/eot_schematic}
	\caption{Schematic depiction of an TE $T$ with embedding $\emb$ and $k$ (encoder) layers $l_i$. Each layer $l_i$ consists of some $h_i$ attention heads $\att_{i,j}$, 
		whose output is combined by $\comb_i$. Additionally, for some layer $l_i$, the computational flow of $T$ regarding input position $m$ is schematically depicted in detail.}
	\label{fig:eot_scheme}
\end{figure}

\paragraph{Fixed-width arithmetics.}
We consider commonly used \emph{fixed-width arithmetics (FA)} that represent 
numbers using a fixed amount of bits, like floating- or fixed-point arithmetic in this work. See 
(\cite{BaranowskiHLNR20_fpa}) (fixed-point) or (\cite{ConstantinidesD20}) (floating-point) for 
rigorous mathematical definitions of such FA. In this work, however, we only make use of a high-level 
view on different FA. Namely, given some
FA $F$ we assume that all values are represented in binary using 
$b \in \nats$ bits for representing its numbers. 
Thus, there are 
$2^b$ different rational numbers representable in $F$. 
Furthermore, we assume that the considered FA can handle 
overflow situations using either saturation or wrap-around 
and rounding situations by rounding up or off.
We consider TE in the context of $F$.
We say that \emph{$T$ works over $F$}, assuming that all computations 
as well as values occurring in a computation $T(w)$ are carried out in the 
arithmetic defined by $F$.


\section{Overview of complexity results and connection to formal reasoning}
\label{sec:overview}

%

\newtheorem{innercustomthm}{Theorem}
\newenvironment{customthm}[1]
{\renewcommand\theinnercustomthm{#1}\innercustomthm}
{\endinnercustomthm}

We address elementary problems arising in formal reasoning
for transformers in this work. In doing so, we pursue the goal of
establishing basic computability and complexity results for corresponding problems in order to frame possibilities and challenges.

We want our results to be detached from any intricacies of
specific transformer architectures: first, we focus on transformer encoders (TE), so
leaving any decoder mechanism unconsidered. The primary reason for this is that
encoder-decoder architectures are of such high expressive power (\cite{PerezBM21})
that almost all formal reasoning problems are easily seen to be undecidable. The secondary reason for this is that
encoder-decoder architectures subsume encoder-only architectures. So any lower complexity bound, 
established in this work, is also a lower bound for encoder-decoder transformers.

\subsection*{Satisfiability as a baseline formal resoning problem}
To achieve widespread implications of our results, we focus our considerations 
on a fundamental problem arising in formal verification and interpretation 
tasks: given a TE $T$, decide whether there is some input 
$w$ leading to some specific output $T(w)$, as defined formally in terms 
of the \emph{satisfiability problem $\empt[\transC]$} for a class $\transC$ of 
specific TE, see Section~\ref{sec:prelim}. 


To see that this captures the essence of formal reasoning problems occurring in practice, consider the following 
formal verification task: Given a TE $T$, verify that $T$ only accepts inputs where every occurrence of a 
specific key from a set $K$ is accompanied by a particular pattern—for example, a key from $K$ must be 
immediately followed by a value from a set $V$. Such tasks are important to ensure syntactic 
correctness or adherence to some protocol specification. Formally, this is called a robustness property (\cite{ShiZCHH20,Huang_survey}). We can phrase this example task as a 
satisfiability problem by considering the property’s negation, namely, to verify that there exists some input $w$ 
in which a key from $K$ is not properly followed by a value from $V$, yet we have $T(w) = 1$.


Similarly, consider a formal interpretation task where we aim to find the minimal subset $E' \subseteq E$  
of some set of error symbols $E$ such that all inputs $w$ containing all errors in $E'$ are rejected by $T$. For 
instance, in a spam detection system powered by a transformer encoder, $E$ could represent a set of spam 
indicators or malicious keywords. We might want to determine the minimal combination of these indicators that 
will cause the system to classify an input as spam. This is understood as an abductive 
explanation in formal explainable AI (\cite{Silva_formalxai}). Given a candidate subset $E'$, we can certify this by checking that there is some $w$ 
which contains all errors $E'$, but is accepted by $T$. This scenario is a special case of the satisfiability 
problem $\empt[\mathcal{T}]$ for some transformer class $\mathcal{T}$.

\subsection*{Overview of results on the complexity of transformer encoder satisfiability}
We start by considering the class $\transCundec$ of TE, motivated by commonly considered
architectures in the theoretical expressiveness community (\cite{PerezBM21,HaoAF22,Hahn20}): $\transCundec$ consists of
those TE that use a positional embedding, expressive enough to compute a sum, hardmax $\hmax$ as normalisation functions
and a scalar-product based scoring, enriched with a nonlinear map represented by an FNN. 
\begin{customthm}{1}[Section~\ref{sec:undec}]
\label{cthm:empt_undec}
	The satisfiability problem $\empt[\transCundec]$ is undecidable.
\end{customthm}

\begin{figure}[t]
	\centering

\tikzset{every picture/.style={line width=0.75pt}} 

\begin{tikzpicture}[x=0.75pt,y=0.75pt,yscale=-.9,xscale=.9]

\draw   (317,100.71) .. controls (363.94,100.71) and (402,141.11) .. (402,190.94) .. controls (402,240.77) and (363.94,281.17) .. (317,281.17) .. controls (270.06,281.17) and (232,240.77) .. (232,190.94) .. controls (232,141.11) and (270.06,100.71) .. (317,100.71) -- cycle ;
\draw   (317,161.17) .. controls (344.61,161.17) and (367,188.03) .. (367,221.17) .. controls (367,254.3) and (344.61,281.17) .. (317,281.17) .. controls (289.39,281.17) and (267,254.3) .. (267,221.17) .. controls (267,188.03) and (289.39,161.17) .. (317,161.17) -- cycle ;
\draw   (317,51.17) .. controls (386.04,51.17) and (442,102.66) .. (442,166.17) .. controls (442,229.68) and (386.04,281.17) .. (317,281.17) .. controls (247.96,281.17) and (192,229.68) .. (192,166.17) .. controls (192,102.66) and (247.96,51.17) .. (317,51.17) -- cycle ;
\draw    (222,37.17) -- (412,37.17) ;
\draw    (502,161.17) .. controls (455.97,176.51) and (396.54,88.89) .. (329.05,111.01) ;
\draw [shift={(327,111.71)}, rotate = 340.2] [fill={rgb, 255:red, 0; green, 0; blue, 0 }  ][line width=0.08]  [draw opacity=0] (5.36,-2.57) -- (0,0) -- (5.36,2.57) -- cycle    ;
\draw  [color={rgb, 255:red, 0; green, 0; blue, 0 }  ,draw opacity=1 ][fill={rgb, 255:red, 0; green, 0; blue, 0 }  ,fill opacity=1 ] (325.31,111.71) .. controls (325.31,110.78) and (326.07,110.02) .. (327,110.02) .. controls (327.93,110.02) and (328.69,110.78) .. (328.69,111.71) .. controls (328.69,112.65) and (327.93,113.4) .. (327,113.4) .. controls (326.07,113.4) and (325.31,112.65) .. (325.31,111.71) -- cycle ;
\draw  [draw opacity=0] (361.27,116.91) .. controls (357.52,122.8) and (339.01,127.1) .. (316.78,126.87) .. controls (295.09,126.64) and (277.02,122.15) .. (272.8,116.39) -- (316.91,114.26) -- cycle ; \draw   (361.27,116.91) .. controls (357.52,122.8) and (339.01,127.1) .. (316.78,126.87) .. controls (295.09,126.64) and (277.02,122.15) .. (272.8,116.39) ;  
\draw  [draw opacity=0] (354.71,185.79) .. controls (348.57,189.37) and (334.4,191.77) .. (317.93,191.59) .. controls (299.62,191.4) and (284.22,188.08) .. (279.56,183.73) -- (318.04,181.38) -- cycle ; \draw   (354.71,185.79) .. controls (348.57,189.37) and (334.4,191.77) .. (317.93,191.59) .. controls (299.62,191.4) and (284.22,188.08) .. (279.56,183.73) ;  
\draw [draw opacity=0]   (498,136.67) ;
\draw [draw opacity=0]   (97,114.67) ;
\draw [draw opacity=0]   (94,206.17) ;
\draw [draw opacity=0]   (99,201) ;

\draw [draw opacity=0]   (102,175.17) ;
\draw [draw opacity=0]   (100,146.17) ;
\draw    (502,181.17) .. controls (455.97,196.51) and (383.49,155.7) .. (315.74,178.76) ;
\draw [shift={(313.69,179.48)}, rotate = 340.2] [fill={rgb, 255:red, 0; green, 0; blue, 0 }  ][line width=0.08]  [draw opacity=0] (5.36,-2.57) -- (0,0) -- (5.36,2.57) -- cycle    ;
\draw  [color={rgb, 255:red, 0; green, 0; blue, 0 }  ,draw opacity=1 ][fill={rgb, 255:red, 0; green, 0; blue, 0 }  ,fill opacity=1 ] (312,179.48) .. controls (312,178.54) and (312.76,177.79) .. (313.69,177.79) .. controls (314.62,177.79) and (315.38,178.54) .. (315.38,179.48) .. controls (315.38,180.41) and (314.62,181.17) .. (313.69,181.17) .. controls (312.76,181.17) and (312,180.41) .. (312,179.48) -- cycle ;
\draw [line width=0.75]    (152,199) .. controls (179.6,166.74) and (262.43,189.95) .. (280.95,153.48) ;
\draw [shift={(282,151.17)}, rotate = 112.13] [fill={rgb, 255:red, 0; green, 0; blue, 0 }  ][line width=0.08]  [draw opacity=0] (5.36,-2.57) -- (0,0) -- (5.36,2.57) -- cycle    ;
\draw  [color={rgb, 255:red, 0; green, 0; blue, 0 }  ,draw opacity=1 ][fill={rgb, 255:red, 0; green, 0; blue, 0 }  ,fill opacity=1 ] (280.31,151.17) .. controls (280.31,150.23) and (281.07,149.48) .. (282,149.48) .. controls (282.93,149.48) and (283.69,150.23) .. (283.69,151.17) .. controls (283.69,152.1) and (282.93,152.86) .. (282,152.86) .. controls (281.07,152.86) and (280.31,152.1) .. (280.31,151.17) -- cycle ;
\draw    (502,249) .. controls (429.7,249.75) and (464.65,81.67) .. (345.5,82.83) ;
\draw [shift={(343.69,82.86)}, rotate = 358.64] [fill={rgb, 255:red, 0; green, 0; blue, 0 }  ][line width=0.08]  [draw opacity=0] (5.36,-2.57) -- (0,0) -- (5.36,2.57) -- cycle    ;
\draw  [color={rgb, 255:red, 0; green, 0; blue, 0 }  ,draw opacity=1 ][fill={rgb, 255:red, 0; green, 0; blue, 0 }  ,fill opacity=1 ] (252,22.86) .. controls (252,21.92) and (252.76,21.17) .. (253.69,21.17) .. controls (254.62,21.17) and (255.38,21.92) .. (255.38,22.86) .. controls (255.38,23.79) and (254.62,24.55) .. (253.69,24.55) .. controls (252.76,24.55) and (252,23.79) .. (252,22.86) -- cycle ;
\draw [line width=0.75]    (152,109) .. controls (188.61,86.43) and (186.73,22.7) .. (250.72,22.81) ;
\draw [shift={(253.69,22.86)}, rotate = 181.8] [fill={rgb, 255:red, 0; green, 0; blue, 0 }  ][line width=0.08]  [draw opacity=0] (5.36,-2.57) -- (0,0) -- (5.36,2.57) -- cycle    ;
\draw    (502,69) .. controls (456.2,84.27) and (445.16,8.25) .. (384.79,20.54) ;
\draw [shift={(382,21.17)}, rotate = 346.36] [fill={rgb, 255:red, 0; green, 0; blue, 0 }  ][line width=0.08]  [draw opacity=0] (5.36,-2.57) -- (0,0) -- (5.36,2.57) -- cycle    ;
\draw  [color={rgb, 255:red, 0; green, 0; blue, 0 }  ,draw opacity=1 ][fill={rgb, 255:red, 0; green, 0; blue, 0 }  ,fill opacity=1 ] (380.31,21.17) .. controls (380.31,20.23) and (381.07,19.48) .. (382,19.48) .. controls (382.93,19.48) and (383.69,20.23) .. (383.69,21.17) .. controls (383.69,22.1) and (382.93,22.86) .. (382,22.86) .. controls (381.07,22.86) and (380.31,22.1) .. (380.31,21.17) -- cycle ;
\draw [draw opacity=0]   (94,110.5) ;

\draw [draw opacity=0]   (555,251) ;

\draw [draw opacity=0]   (556,70.5) ;

\draw  [color={rgb, 255:red, 0; green, 0; blue, 0 }  ,draw opacity=1 ][fill={rgb, 255:red, 0; green, 0; blue, 0 }  ,fill opacity=1 ] (342,82.86) .. controls (342,81.92) and (342.76,81.17) .. (343.69,81.17) .. controls (344.62,81.17) and (345.38,81.92) .. (345.38,82.86) .. controls (345.38,83.79) and (344.62,84.55) .. (343.69,84.55) .. controls (342.76,84.55) and (342,83.79) .. (342,82.86) -- cycle ;

	\draw (297,71) node [anchor=north west][inner sep=0.75pt]  [font=\tiny] [align=left] {decidable};
	\draw (293,141) node [anchor=north west][inner sep=0.75pt]  [font=\tiny] [align=left] {NEXPTIME};
	\draw (309,203.17) node [anchor=north west][inner sep=0.75pt]  [font=\tiny] [align=left] {NP};
	\draw (293,21) node [anchor=north west][inner sep=0.75pt]  [font=\tiny] [align=left] {undecidable};
	\draw (509,156.67) node [anchor=west] [inner sep=0.75pt]  [font=\tiny]  {$\textsc{btrSat}_{\text{bin}}[\mathcal{T}]$};
	\draw (509.18,176.4) node [anchor=west] [inner sep=0.75pt]  [font=\tiny]  {$\textsc{btrSat}_{\text{un}}[\mathcal{T}]$};
	\draw (570,164) node [anchor=west] [inner sep=0.75pt]  [font=\tiny] [align=left] {(Theorem 3)};
	\draw (94,110.5) node [anchor=west] [inner sep=0.75pt]  [font=\tiny]  {$\textsc{trSat}[\mathcal{T}_{\mathit{udec}}^{\text{log}}]$};
	\draw (97,110.5) node [anchor=east] [inner sep=0.75pt]  [font=\tiny] [align=left] {(Theorem 2)};
	\draw (100,201) node [anchor=west] [inner sep=0.75pt]  [font=\tiny]  {$\textsc{trSat}[\mathcal{T}_{\circ }^{\text{fix}}]$};
	\draw (97,201) node [anchor=east] [inner sep=0.75pt]  [font=\tiny] [align=left] {(Theorem 4)};
	\draw (509,70.5) node [anchor=west] [inner sep=0.75pt]  [font=\tiny]  {$\textsc{trSat}[\mathcal{T}_{\mathit{udec}}]$};
	\draw (570,70.5) node [anchor=west] [inner sep=0.75pt]  [font=\tiny] [align=left] {(Theorem 1)};
	\draw (509,251) node [anchor=west] [inner sep=0.75pt]  [font=\tiny]  {$\textsc{trSat}[\mathcal{T}^{\text{fix}}]$};
	\draw (570,251) node [anchor=west] [inner sep=0.75pt]  [font=\tiny] [align=left] {(Theorem 5)};

\end{tikzpicture}
	\caption{Schematic overview of the computability and complexity results, established in this work. The classes of TE are described in the pretext of the respective theorem. Note that $\transC$ refers to an arbitrary class
	of (computable) TE. The small subset in the classes NP and \NEXPTIME refers to the complete problems. The
	\NEXPTIME-hardness result of $\empt[\transCfix]$ is not visualized}
	\label{fig:overview}
\end{figure}

Essentially, this result implies that even for TE the combination of $\hmax$ normalizations and
expressive scoring is enough to make satisfiability undecidable. Generally, this makes formal reasoning, 
like verifying robustness properties or giving formal explanations, impossible for classes of TE that subsume $\transCundec$.
Specifically, no such methods exist that are fully automatic, sound and complete. Theorem~\ref{cthm:empt_undec}
does not preclude the existence of incomplete methods for instance.

Recently, so-called \emph{log-precision transformers} have been studied (\cite{MerrillS23}).
These transformers are defined as usual, but given a word length $n$ it is assumed that
a log-precision transformer $T$ uses at most $\mathcal{O}(\log(n))$ bits in its internal
computations. To complement these theoretical considerations, we consider the class
$\logtransCundec$ of TE from $\transCundec$ that work with log-precision. Unfortunately, 
this restriction is not enough to circumvent general undecidability.
\begin{customthm}{2}[Section~\ref{sec:undec}]
	The satisfiability problem $\empt[\logtransCundec]$ is undecidable.
\end{customthm}

Given such impossibility results, we turn our attention to the search for decidable cases. 
We make the reasonable assumption that all considered TE are computable, meaning that their 
components like scoring, normalisation, pooling, combination and output functions are computable functions.
Moreover, we assume that each TE $T$ computes its output $T(w)$ for a given input $w$ within polynomial time relative to the size of $T$ and the length of $w$. 
This is reasonable, as the output is computed layer-wise where each layer involves a quadratic amount of calculations per attention head. 
Therefore, the computation depends polynomially on the depth and width of $T$ and the length of $w$.

First, we consider a natural restriction of the satisfiability problem by bounding the length of valid inputs. 
Then satisfiability becomes decidable, regardless of the respective class of TE, but it is difficult from a 
complexity-theoretic perspective. To formalize this, we introduce the \emph{bounded satisfiability problem}
$\bempt[\mathcal{T}]$ for a class $\mathcal{T}$: given an TE $T \in \mathcal{T}$ and a bound $n \in \nats$
on its input length, decide whether there is word $w$ with $|w| \leq n$ s.t.\ $T(w)=1$. 
\begin{customthm}{3}[Section~\ref{sec:dec}, informal restatement]
	The bounded satisfiability problem $\bempt[\transC]$ is decidable for all classes $\transC$ of (computable) TE. 
	Depending on whether $n$ is given in binary or unary coding, $\bempt[\transC]$ is NEXPTIME-, resp.\ NP-complete assuming
	$\transC \supseteq \transCundec$.
\end{customthm}

Informally, this result implies that bounding the word length is a method to enable formal reasoning. However, it does not
change the fact that satisfiability is an essentially hard problem. As hardness 
is a lower bound, this also translates to subsuming formal reasoning tasks.

Imposing a bound on the input length may not be a viable restriction for various formal reasoning tasks. We therefore
study other ways of obtaining decidability. We address the unbounded satisfiability 
problem for practically motivated classes of TE. We consider the class $\periodTransCfix$ of TE that use a positional 
embedding with some periodicity in their positional encoding, 
commonly seen in practice (\cite{VaswaniSPUJGKP17,DufterSS22}), use softmax or hardmax as normalisation 
and which work over some fixed-width arithmetic (FA). 
This last restriction is motivated by recent popular ways to handle ever increasing TE sizes, for example via quantization or using low-bit arithmetics (\cite{BondarenkoNB21}). From a complexity-theoretic perspective, the use of fixed-width arithmetic 
has a similar effect to bounding the input length.  
\begin{customthm}{4}[Section~\ref{sec:dec}] 
	The satisfiability problem $\empt[\periodTransCfix]$ is in \NEXPTIME.
\end{customthm}
So automatic, sound and complete formal reasoning for periodical TE in a fixed-width arithmetic environment is 
generally possible with potentially high complexity. Note that formal reasoning tasks with more complex safety or interpretability 
specifications than simple satisfiability may even lead to higher complexities.

We then aim to show that this is optimal by providing a matching lower bound. However, we need to relax these
restrictions again, namely considering the class $\transCfix$ allowing for TE that use arbitrary
embeddings and work over some fixed-width arithmetic. 
However, due to the fixed-witdth arithmetic assumption, which consequently applies to positional informations as well,
every embedding must necessarily witness a periodic behaviour. Thus, decidability is implied by the same arguments as used in Theorem~4.
Additionally, we show that high complexity is unavoidable, making sound and complete automatic formal reasoning for fixed-width
arithmetic transformers with arbitrary positional embeddings practically intractable.
\begin{customthm}{5}[Section~\ref{sec:dec}]
	The satisfiability problem $\empt[\transCfix]$ is decidable and \NEXPTIME-hard.
\end{customthm}

Figure~\ref{fig:overview} provides a schematic illustration of the computability and complexity results 
summarized in this section. This figure is intended purely for technical clarity, summarizing our findings without 
delving into the formal reasoning implications discussed earlier.

\section{Transformer encoder satisfiability is generally undecidable}
\label{sec:undec}

We consider a class of TE, denoted by $\transCundec$, which we design with the aim of minimising its expressive power, but having an undecidable satisfiability problem.
We define $\transCundec$ by giving minimum requirements:  
positional-embeddings can be of the form $\emb(a_k,0) = (1,1, 0, 0,  k)$ and 
$\emb(a_k,i) = (0,1, i, \sum_{j=0}^{i} j ,  k)$ where we assume some order on the alphabet symbols
$a_1 , a_2, \dotsc$. For scoring functions we allow for $N(\scalar{Q\bs{x}, K\bs{y}})$ where $N$ is a classical Feedforward Neural Network
(FNN) with $\relu$ activations, $Q$ and $K$ are linear maps and $\scalar{\dotsb}$ denotes the usual scalar product, 
for normalisations we allow for hardmax $\hmax(i, x_1, \dotsc, x_n) = \frac{1}{m}$ if 
$x_i \geq x_j$ for all $j \leq n$ and there are $m$ distinct $x_j$ such that $x_i = x_j$ otherwise $\hmax(i, x_1, \dotsc, x_n) = 0$.
Combinations as well as output functions can be classical FNN with $\relu$ activation. Aside from technical reasons,
we motivate the choice of $\transCundec$ in Section~\ref{sec:overview}.
To ease our notation, we exploit the fact that using $\hmax$ as normalisation implies a clearly
defined subset of positions $M$ that are effective in the computation of some attention head $\att$ 
given some position $i$, namely those that are weighted non-zero. In this case, we say that $\att$ \emph{attends to $M$ given position $i$}.

\begin{figure}[t]
	\centering

\tikzset{every picture/.style={line width=0.75pt}} 

\begin{tikzpicture}[x=0.75pt,y=0.75pt,yscale=-1,xscale=1]
	
	\draw  [draw opacity=0][fill={rgb, 255:red, 0; green, 0; blue, 0 }  ,fill opacity=1 ] (446.84,47.82) .. controls (446.84,46.83) and (446.04,46.03) .. (445.05,46.03) .. controls (444.06,46.03) and (443.26,46.83) .. (443.26,47.82) .. controls (443.26,48.8) and (444.06,49.6) .. (445.05,49.6) .. controls (446.04,49.6) and (446.84,48.8) .. (446.84,47.82) -- cycle ;
	\draw  [draw opacity=0][fill={rgb, 255:red, 0; green, 0; blue, 0 }  ,fill opacity=1 ] (446.84,39.44) .. controls (446.84,38.45) and (446.04,37.65) .. (445.05,37.65) .. controls (444.06,37.65) and (443.26,38.45) .. (443.26,39.44) .. controls (443.26,40.43) and (444.06,41.23) .. (445.05,41.23) .. controls (446.04,41.23) and (446.84,40.43) .. (446.84,39.44) -- cycle ;
	\draw  [draw opacity=0][fill={rgb, 255:red, 0; green, 0; blue, 0 }  ,fill opacity=1 ] (446.84,31.06) .. controls (446.84,30.08) and (446.04,29.28) .. (445.05,29.28) .. controls (444.06,29.28) and (443.26,30.08) .. (443.26,31.06) .. controls (443.26,32.05) and (444.06,32.85) .. (445.05,32.85) .. controls (446.04,32.85) and (446.84,32.05) .. (446.84,31.06) -- cycle ;
	\draw  [draw opacity=0][fill={rgb, 255:red, 0; green, 0; blue, 0 }  ,fill opacity=1 ] (269.67,116.84) .. controls (270.66,116.84) and (271.46,116.04) .. (271.46,115.05) .. controls (271.46,114.07) and (270.66,113.27) .. (269.67,113.27) .. controls (268.69,113.27) and (267.89,114.07) .. (267.89,115.05) .. controls (267.89,116.04) and (268.69,116.84) .. (269.67,116.84) -- cycle ;
	\draw  [draw opacity=0][fill={rgb, 255:red, 0; green, 0; blue, 0 }  ,fill opacity=1 ] (278.05,116.84) .. controls (279.03,116.84) and (279.83,116.04) .. (279.83,115.05) .. controls (279.83,114.07) and (279.03,113.27) .. (278.05,113.27) .. controls (277.06,113.27) and (276.26,114.07) .. (276.26,115.05) .. controls (276.26,116.04) and (277.06,116.84) .. (278.05,116.84) -- cycle ;
	\draw  [draw opacity=0][fill={rgb, 255:red, 0; green, 0; blue, 0 }  ,fill opacity=1 ] (286.42,116.84) .. controls (287.41,116.84) and (288.21,116.04) .. (288.21,115.05) .. controls (288.21,114.07) and (287.41,113.27) .. (286.42,113.27) .. controls (285.44,113.27) and (284.64,114.07) .. (284.64,115.05) .. controls (284.64,116.04) and (285.44,116.84) .. (286.42,116.84) -- cycle ;
	
	\draw [color={rgb, 255:red, 0; green, 0; blue, 0 }  ,draw opacity=1 ]   (122.82,131.36) .. controls (144.62,166.57) and (328.61,194.39) .. (391.38,171.21) ;
	\draw [shift={(394.16,170.12)}, rotate = 157.26] [fill={rgb, 255:red, 0; green, 0; blue, 0 }  ,fill opacity=1 ][line width=0.08]  [draw opacity=0] (5.36,-2.57) -- (0,0) -- (5.36,2.57) -- cycle    ;
	\draw [color={rgb, 255:red, 0; green, 0; blue, 0 }  ,draw opacity=1 ]   (148.66,131.36) .. controls (208.42,162.63) and (341.49,163.81) .. (391.92,132.8) ;
	\draw [shift={(394.16,131.36)}, rotate = 146.24] [fill={rgb, 255:red, 0; green, 0; blue, 0 }  ,fill opacity=1 ][line width=0.08]  [draw opacity=0] (5.36,-2.57) -- (0,0) -- (5.36,2.57) -- cycle    ;
	\draw [color={rgb, 255:red, 0; green, 0; blue, 0 }  ,draw opacity=1 ]   (174.5,131.36) .. controls (222.64,155.87) and (359.26,153.93) .. (430.78,132.03) ;
	\draw [shift={(432.93,131.36)}, rotate = 162.37] [fill={rgb, 255:red, 0; green, 0; blue, 0 }  ,fill opacity=1 ][line width=0.08]  [draw opacity=0] (5.36,-2.57) -- (0,0) -- (5.36,2.57) -- cycle    ;
	\draw [color={rgb, 255:red, 0; green, 0; blue, 0 }  ,draw opacity=1 ]   (226.19,105.51) .. controls (275.36,59.9) and (374.51,45.14) .. (430.4,65.78) ;
	\draw [shift={(432.93,66.75)}, rotate = 201.71] [fill={rgb, 255:red, 0; green, 0; blue, 0 }  ,fill opacity=1 ][line width=0.08]  [draw opacity=0] (5.36,-2.57) -- (0,0) -- (5.36,2.57) -- cycle    ;
	\draw [color={rgb, 255:red, 0; green, 0; blue, 0 }  ,draw opacity=1 ]   (200.35,105.51) .. controls (251.26,67.33) and (343.08,35.34) .. (391.96,65.34) ;
	\draw [shift={(394.16,66.75)}, rotate = 213.85] [fill={rgb, 255:red, 0; green, 0; blue, 0 }  ,fill opacity=1 ][line width=0.08]  [draw opacity=0] (5.36,-2.57) -- (0,0) -- (5.36,2.57) -- cycle    ;
	\draw [color={rgb, 255:red, 0; green, 0; blue, 0 }  ,draw opacity=1 ]   (252.03,105.51) .. controls (294.42,69.27) and (381.59,35.05) .. (469.04,65.8) ;
	\draw [shift={(471.69,66.75)}, rotate = 200.13] [fill={rgb, 255:red, 0; green, 0; blue, 0 }  ,fill opacity=1 ][line width=0.08]  [draw opacity=0] (5.36,-2.57) -- (0,0) -- (5.36,2.57) -- cycle    ;
	\draw   (471.69,66.75) -- (497.53,66.75) -- (497.53,92.59) -- (471.69,92.59) -- cycle ;
	
	\draw   (432.93,66.75) -- (458.77,66.75) -- (458.77,92.59) -- (432.93,92.59) -- cycle ;
	\draw   (394.16,66.75) -- (420.01,66.75) -- (420.01,92.59) -- (394.16,92.59) -- cycle ;
	\draw   (432.93,105.51) -- (458.77,105.51) -- (458.77,131.36) -- (432.93,131.36) -- cycle ;
	\draw   (394.16,105.51) -- (420.01,105.51) -- (420.01,131.36) -- (394.16,131.36) -- cycle ;
	\draw   (394.16,144.28) -- (420.01,144.28) -- (420.01,170.12) -- (394.16,170.12) -- cycle ;
	\draw [color={rgb, 255:red, 0; green, 0; blue, 0 }  ,draw opacity=1 ][line width=2.25]    (420.01,79.67) -- (432.93,79.67) ;
	\draw [color={rgb, 255:red, 0; green, 0; blue, 0 }  ,draw opacity=1 ][line width=2.25]    (458.77,79.67) -- (471.69,79.67) ;
	\draw [color={rgb, 255:red, 0; green, 0; blue, 0 }  ,draw opacity=1 ][line width=2.25]    (420.01,118.43) -- (432.93,118.43) ;
	\draw [color={rgb, 255:red, 0; green, 0; blue, 0 }  ,draw opacity=1 ][line width=2.25]    (407.09,144.28) -- (407.09,131.36) ;
	\draw [color={rgb, 255:red, 0; green, 0; blue, 0 }  ,draw opacity=1 ][line width=2.25]    (407.09,92.59) -- (407.09,105.51) ;
	\draw [color={rgb, 255:red, 0; green, 0; blue, 0 }  ,draw opacity=1 ]   (316.64,105.51) .. controls (325.33,97.62) and (351.17,97.71) .. (362.67,104.29) ;
	\draw [shift={(365,105.89)}, rotate = 219.52] [fill={rgb, 255:red, 0; green, 0; blue, 0 }  ,fill opacity=1 ][line width=0.08]  [draw opacity=0] (3.57,-1.72) -- (0,0) -- (3.57,1.72) -- cycle    ;
	\draw [color={rgb, 255:red, 0; green, 0; blue, 0 }  ,draw opacity=1 ][line width=2.25]    (445.85,92.59) -- (445.85,105.51) ;
	\draw [color={rgb, 255:red, 0; green, 0; blue, 0 }  ,draw opacity=1 ][line width=2.25]    (503.99,118.43) -- (503.99,131.36) ;
	\draw [color={rgb, 255:red, 0; green, 0; blue, 0 }  ,draw opacity=1 ][line width=2.25]    (497.53,124.89) -- (510.46,124.89) ;
	
	\draw (389.83,160.73) node [anchor=east] [inner sep=0.75pt]  [font=\tiny]  {$0\ $};
	\draw (389.83,121.96) node [anchor=east] [inner sep=0.75pt]  [font=\tiny]  {$1\ $};
	\draw (389.83,83.2) node [anchor=east] [inner sep=0.75pt]  [font=\tiny]  {$2\ $};
	\draw (404.5,177.51) node [anchor=north] [inner sep=0.75pt]  [font=\tiny]  {$0$};
	\draw (443.26,177.51) node [anchor=north] [inner sep=0.75pt]  [font=\tiny]  {$1$};
	\draw (482.03,177.51) node [anchor=north] [inner sep=0.75pt]  [font=\tiny]  {$2$};
	\draw (112.74,111.52) node [anchor=north west][inner sep=0.75pt]  [font=\small] [align=left] {a};
	\draw (138.58,111.52) node [anchor=north west][inner sep=0.75pt]  [font=\small] [align=left] {a};
	\draw (164.42,108.94) node [anchor=north west][inner sep=0.75pt]  [font=\small] [align=left] {b};
	\draw (190.27,108.94) node [anchor=north west][inner sep=0.75pt]  [font=\small] [align=left] {d};
	\draw (216.11,111.52) node [anchor=north west][inner sep=0.75pt]  [font=\small] [align=left] {c};
	\draw (241.95,108.94) node [anchor=north west][inner sep=0.75pt]  [font=\small] [align=left] {d};
	\draw (321.94,109.67) node [anchor=north west][inner sep=0.75pt]  [font=\tiny,color={rgb, 255:red, 0; green, 0; blue, 0 }  ,opacity=1 ] [align=left] {Lemma 1};
	\draw (503.99,136.2) node [anchor=north] [inner sep=0.75pt]  [font=\tiny,color={rgb, 255:red, 0; green, 0; blue, 0 }  ,opacity=1 ] [align=left] {Lemma 2};
	\draw (484.61,79.67) node  [font=\small] [align=left] {d};
	\draw (445.85,79.67) node  [font=\small] [align=left] {c};
	\draw (407.09,79.67) node  [font=\small] [align=left] {d};
	\draw (445.85,118.43) node  [font=\small] [align=left] {b};
	\draw (407.09,118.43) node  [font=\small] [align=left] {a};
	\draw (407.09,157.2) node  [font=\small] [align=left] {a};

\end{tikzpicture}
	\caption{Schematic depiction of the expressive capabilities of TE from $\transCundec$ in context of 
	the \unbOTWP proven in Lemma~\ref{sec:undec;lem:decode} and Lemma~\ref{sec:undec;lem:lin}.}
	\label{sec:undec;fig:tiling_to_word}
\end{figure}

We prove that $\empt[\transCundec]$ is undecidable by establishing a reduction from the \emph{(unbounded) octant tiling-word problem (\unbOTWP)}. 
For details on tiling problems, see Appendix~\ref{app:tiling}. The $\unbOTWP$ is defined as follows: given a tiling system $\mathcal{S} = (S,H,V,t_I,t_F)$ 
where $S$ is some finite set of tiles, $H, V \subseteq S^2$ and $t_I, t_F \in S$ we have to decide whether there is a word (a) 
$t_{0,0},t_{1,0},t_{1,1},t_{2,0},t_{2,1},t_{2,2}, t_{3,0},\ldots,$ $t_{k,k} \in S^+$ such
that (b) $t_{0,0} = t_I$, $t_{k,k} = t_F$, (c) for all $i \leq k$ and $0 \leq j < i$ holds $(t_{i,j}, t_{i,j+1}) \in H$ and (d) for all $i \leq k-1$ and $j \leq i$ holds 
$(t_{i,j}, t_{i+1,j}) \in V$. We call a word $w$ which satisfies (a) an \emph{encoded tiling} and if (b)-(d) are satisfied as well then 
we call $w$ a \emph{valid} encoded tiling. 
Our proof strategy is easily described: given a tiling system $\mathcal{S}$, we build an TE 
$T_\mathcal{S} \in \transCundec$ which accepts a word $w$ if it fulfils conditions (a) to (d) and otherwise $T_\mathcal{S}$ rejects $w$.
We derive most technical proofs of the following lemmas and theorems to Appendix~\ref{app:undec_proofs} and instead provide intuitions 
and proof sketches in this section. 

We start with the first observation: the expressiveness of TE in $\transCundec$ is sufficient to decode the octant tiling potentially represented by a given word $w$, as depicted by the arrows in Figure~\ref{sec:undec;fig:tiling_to_word}.
In detail, two encoder layers in combination with a positional embedding definable in $\transCundec$ are expressive enough to compute for a given symbol $t$ in $w$ to which position in an octant tiling it corresponds, if we interpret $w$ as an encoded tiling.
\begin{lemma}
	\label{sec:undec;lem:decode}
 	Let $\mathcal{S}$ be a tiling system with tiles $S = \{a_1, \dotsc, a_k\}$.  
 	There is an embedding function $\emb$ and there are encoder layers 
 	$l_1$ and $l_2$ definable in $\transCundec$ such that for each 
 	word $w = t_{0,0} t_{1,0} t_{1,1} t_{2,0} \dotsb t_{m,n} \in S^+$ holds that
 	$l_2(l_1(\emb(w))) = \bs{x}^2_1 \dotsc \bs{x}^2_{|w|}$ where $\bs{x}^2_i=(1, i, r(i), c(i), k_i)$ such that 
 	$a_{k_i}$ is equal to the symbol at position $i$ in $w$ and 
 	$(r(1), c(1)), (r(2), c(2)), \dotsc, (r(|w|), c(|w|))$ is equal to $(0,0), (1,0) \dotsc, (m,n)$. 
\end{lemma}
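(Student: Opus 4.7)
The plan is to exploit the positional embedding, which already supplies each $\bs{x}^0_i$ with the position $i$ and the triangular number $T_i := i(i+1)/2 = \sum_{j=0}^{i} j$ as components. For a word $t_{0,0} t_{1,0} t_{1,1} \dots t_{m,n}$, the (1-indexed) position of $t_{r,c}$ in the word equals $T_r + c + 1$, so $r(i)$ is the largest $j$ with $T_j < i$, equivalently the smallest $j$ with $T_j \geq i$ minus one, while $c(i) = i - T_{r(i)} - 1$. A single attention head in layer~1 suffices to fetch the needed information; layer~2 then just propagates (or splits the work with) layer~1.

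For the layer~1 attention, pick $Q, K$ so that $\langle Q\bs{x}_i, K\bs{x}_j\rangle = T_j - i$; this is a bilinear form $\bs{x}_i^{\top}(Q^{\top}K)\bs{x}_j$ extractable by placing a $+1$ entry in $Q^{\top}K$ that pairs the constant-$1$ second component of $\bs{x}_i$ with the fourth component ($T_j$) of $\bs{x}_j$, and a $-1$ entry that pairs the third component ($i$) of $\bs{x}_i$ with the constant-$1$ second component of $\bs{x}_j$. Choose the scoring FNN as $N(v) = -v - C\cdot\relu(-v)$ for a sufficiently large constant $C$, so that $N(v) = -v$ whenever $v \geq 0$ while $N(v) \ll -v'$ for any $v' \geq 0$ whenever $v < 0$. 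With hardmax, the attention then selects the unique position $j^\star = r(i)+1$ that minimises $T_j - i$ subject to $T_j \geq i$, i.e., the smallest $j$ with $T_j \geq i$. Setting the pooling matrix $W$ to the identity, the pool returns $\bs{x}_{j^\star} = (0, 1, j^\star, T_{j^\star}, k_{j^\star})$.

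The combination FNN then has $\bs{x}_i = (0, 1, i, T_i, k_i)$ and $(0, 1, j^\star, T_{j^\star}, k_{j^\star})$ as inputs, from which $r(i) = j^\star - 1$, $T_{r(i)} = T_{j^\star} - j^\star$, and $c(i) = i - T_{r(i)} - 1 = i - T_{j^\star} + j^\star - 1$ are all affine, hence representable by a ReLU FNN; it emits $(1, i, r(i), c(i), k_i)$. Layer~2 can be taken to be an identity pass-through (a constant-score attention head together with a projecting combination), or the layer~1 work can be split so that layer~1 fetches the triangular numbers and layer~2 does the arithmetic.

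The main subtlety will be verifying that hardmax yields a unique argmax at every position $i \in \{1, \dots, |w|\}$: strict monotonicity of $T_j$ in $j$ rules out ties among positions with $v \geq 0$, and because $|w| = T_m + n + 1$ with $0 \leq n \leq m$ satisfies $|w| \geq m+1$, the target $j^\star \leq m+1$ always lies within the word, so the attention is well-defined on every input. Correctly engineering $C$ so that the negative branch of $N$ is strictly dominated by the $v\ge 0$ branch, and carefully matching the embedding slots to the bilinear form extracting $T_j - i$, are the only technical points requiring care.
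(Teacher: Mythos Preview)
Your single-head extraction of $j^\star$ cannot work, and the difficulty you flag as ``requiring care'' is in fact fatal. The scoring in $\transCundec$ has the form $N(\langle Q\bs{x}_i,K\bs{x}_j\rangle)$ with $N$ a \emph{fixed} ReLU FNN, so at position $i$ the score is a fixed piecewise-linear function of the single scalar $v=T_j-i$. For hardmax to pick $j^\star$, $N$ must be strictly decreasing on $[0,\infty)$ (to single out the least non-negative $v$), and we also need $N(-1)<N(T_{j^\star}-i)$ for every $i$. At the first cell of row $r$ (where $c(i)=0$ and $i=T_r+1$) one has $T_{j^\star}-i=T_{r+1}-T_r-1=r$, so the requirement becomes $N(-1)<N(r)$ for all $r\ge 0$. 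A piecewise-linear $N$ that is strictly decreasing on $[0,\infty)$ satisfies $N(r)\to-\infty$, so no finite value of $N(-1)$ --- hence no fixed constant $C$ in your $N(v)=-v-C\,\relu(-v)$ --- works uniformly over all word lengths. The TE is fixed once and for all; $C$ cannot depend on $|w|$.

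The paper's construction is genuinely two-stage and cannot be collapsed to your ``identity pass-through'' option. Layer~$1$ uses the score $-\relu\bigl(T_j-(i-1)\bigr)$, which equals $0$ for \emph{every} $j$ with $T_j\le i-1$, so hardmax attends to all such positions simultaneously; the pooling projects onto the first coordinate, which is $1$ only at position~$1$, so the pooled value is exactly $1/\ell$ with $\ell=r(i)$ the number of attended positions. Layer~$2$ then forms the bilinear scalar $\frac{1}{\ell}\cdot j$ and applies $N(x)=-|x-1|$, so hardmax now attends to the unique $j=\ell=r(i)$; from that position it reads $r(i)$ and $T_{r(i)}$ and obtains $c(i)=i-1-T_{r(i)}$ affinely. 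The count-then-invert trick is what eliminates any length-dependent threshold in the scoring, and it is precisely why two non-trivial layers are required.
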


Assume that $w \in S^+$. Lemma~\ref{sec:undec;lem:decode} implies that a
TE $T \in \transCundec$ is generally able to recognize whether $w$ is an encoded tiling 
as soon as $T$ is able to check whether $r(|w|)$ and $c(|w|)$ of the last symbol of $w$ processed by $l_2(l_1(\emb(\dotsb)))$ are equal.
Therefore, property (a) and also (b) can be checked by TE in $\transCundec$ using the residual connection and FNN.

Property (c) can be ensured if it is possible to build an attention head that is able to attend to position $k+1$ given
position $k$. Let $w = t_{0,0} t_{1,0} t_{1,1} t_{2,0} \dotsb t_{m,m}$ 
with $t_{i,j} \in S$. To verify whether property (d) holds, a TE must be able to attend
to position $k + (i+1)$ given position $k$ corresponding to symbol $t_{i,j}$. This is depicted in Figure~\ref{sec:undec;fig:tiling_to_word} 
by the bold lines between horizontal and vertical tiles.
In summary, to check properties (a) -- (d) it is left to argue that there are attention heads in $\transCundec$ that can attend to positions 
depending linearly on the values of the currently considered position. 
\begin{lemma}
	\label{sec:undec;lem:lin}
	Let $f(x_1, \dotsc, x_k) = a_1 x_1 + \dotsb + a_k x_k + b$ with $a_i, b \in \reals$ be some linear function. 
	There is attention head $\att_f$ in $\transCundec$ 
	such that for all sequences $\bs{x}_1, \dotsc, \bs{x}_m$ where all $\bs{x}_i = (1,i, \bs{y}_i)$ for some $\bs{y}_i \in \reals^{k-2}$
	attention head $\att_f$ attends to $\{\bs{x}_{j}, \bs{x}_{j+1}\}$ given position $i$ if $f(\bs{x}_i) = j + \frac{1}{2}$ with $j \leq m-1$
	and otherwise to $\{\bs{x}_j\}$ where $j$ is the value nearest to $f(\bs{x}_i)$.
\end{lemma}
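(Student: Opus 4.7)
My plan is to construct the head so that its scoring function evaluates to $-|f(\bs{x}_i)-j|$ at the pair $(\bs{x}_i,\bs{x}_j)$. Under hardmax, this will then concentrate mass precisely on the integer(s) in $\{1,\dots,m\}$ closest to $f(\bs{x}_i)$, giving exactly the behavior demanded by the lemma.

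\textbf{Step 1: shape the scalar product into $f(\bs{x}_i)-j$.}
I exploit the fact that every $\bs{x}_i$ has a constant ``$1$'' as its first coordinate: this lets me absorb the additive constant $b$ of $f$ and regard $f(\bs{x}_i)=a_1+a_2 i+a_3 y_{i,1}+\dots+b$ as a genuine linear functional of $\bs{x}_i$. I pick linear maps $Q$ and $K$ (represented by $2\times k$ matrices over the coordinates of $\bs{x}$) so that
\[
Q\bs{x}_i=\bigl(f(\bs{x}_i),\,1\bigr)^\top,\qquad K\bs{x}_j=\bigl(1,\,-j\bigr)^\top.
\]
Both maps exist because $f$ is linear in $\bs{x}_i$ (with the ``$1$''-coordinate carrying $b$) and because the position $j$ sits in the second coordinate of $\bs{x}_j$. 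Taking the inner product yields $\scalar{Q\bs{x}_i,K\bs{x}_j}=f(\bs{x}_i)-j$.

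\textbf{Step 2: apply a ReLU FNN implementing $-|\cdot|$.}
I define $N\colon\reals\to\reals$ by $N(z)=-\relu(z)-\relu(-z)$, a one-hidden-layer ReLU FNN computing $N(z)=-|z|$. Composing, the scoring becomes $\score(\bs{x}_i,\bs{x}_j)=N(\scalar{Q\bs{x}_i,K\bs{x}_j})=-|f(\bs{x}_i)-j|$.

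\textbf{Step 3: analyse hardmax.}
Let $t:=f(\bs{x}_i)$. Over $j\in\{1,\dots,m\}$ the score $-|t-j|$ is maximised exactly at the integer(s) nearest to $t$. If $t=j_0+\tfrac12$ for some integer $j_0\le m-1$, then both $j_0$ and $j_0+1$ attain the value $-\tfrac12$ and strictly beat every other position, so hardmax places weight $\tfrac12$ on each; in all remaining cases (including the boundary cases $t<1$ or $t>m$, where the nearest integer in $\{1,\dots,m\}$ is $1$ or $m$ respectively) there is a unique argmax, so hardmax attends to a single position. This matches the lemma verbatim.

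\textbf{Expected obstacle.}
The subtlety of $\transCundec$ is that the ReLU FNN $N$ is applied only to the \emph{scalar} output of the inner product, not to the vectors $Q\bs{x}_i$ and $K\bs{x}_j$ separately; so every nonlinearity has to come from $N$ acting on one number. The design above deals with this by using the bilinear form to deliver exactly the signed difference $t-j$ and letting $N$ turn it into a distance $-|t-j|$ via two ReLUs. A secondary thing to check is that the formula for $f$ does not need any coordinates beyond those already present in $\bs{x}_i=(1,i,\bs{y}_i)$; since $f$ is by hypothesis linear in these very coordinates, this causes no problem.
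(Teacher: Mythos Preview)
Your proof is correct and essentially identical to the paper's: both construct $Q,K$ so that $\scalar{Q\bs{x}_i,K\bs{x}_j}=f(\bs{x}_i)-j$ (the paper uses a $3\times k$ map keeping $b$ in a separate coordinate whereas you fold $b$ into the first row via the constant-$1$ slot, a cosmetic difference), then apply a small ReLU FNN computing $-|\cdot|$ and let $\hmax$ select the nearest integer(s). Your Step~3 analysis of the tie case $f(\bs{x}_i)=j+\tfrac12$ and the unique-argmax case matches the paper's conclusion exactly.
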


In combination, the previous lemmas indicate that TE from $\transCundec$ are able to verify whether a given word is a valid encoded tiling. 
This expressive power is enough, to lead to an undecidable satisfiability problem for TE from $\transCundec$.
\begin{theorem}
	\label{sec:undec;thm:undec}
	The decision problem $\empt[\transCundec]$ is undecidable.
\end{theorem}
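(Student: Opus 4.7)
The plan is to reduce the undecidable problem \unbOTWP to $\empt[\transCundec]$. Given a tiling system $\mathcal{S}=(S,H,V,t_I,t_F)$, I would construct a TE $T_\mathcal{S} \in \transCundec$ such that $T_\mathcal{S}(w)=1$ if and only if $w \in S^+$ is a valid encoded tiling for $\mathcal{S}$. Then $T_\mathcal{S}$ is satisfiable exactly when \unbOTWP has a positive answer for $\mathcal{S}$, proving undecidability.

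The construction proceeds by first invoking Lemma~\ref{sec:undec;lem:decode} to obtain an embedding and two initial layers that, for every input position $i$, compute a vector $\bs{x}^2_i=(1,i,r(i),c(i),k_i)$, where $(r(i),c(i))$ are the octant coordinates induced by interpreting $w$ as an encoded tiling, and $k_i$ encodes the tile symbol at position $i$. On top of these two layers I add one further layer containing three attention heads, each instantiated via Lemma~\ref{sec:undec;lem:lin}. The first head uses the linear function $f_{\text{next}}(\bs{x}_i)=i+1$ to attend to the immediate successor position, which allows checking the horizontal constraint by comparing $k_i$ with $k_{i+1}$ whenever $c(i)<r(i)$. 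The second head uses $f_{\text{down}}(\bs{x}_i)=i+r(i)+1$: by the shape of the octant encoding, position $i$ corresponding to tile $(r(i),c(i))$ has its vertical neighbour $(r(i)+1,c(i))$ exactly $r(i)+1$ symbols later, so this head delivers $k_{i+r(i)+1}$ for the vertical check $(t_{i,j},t_{i+1,j})\in V$. A third head (or a combination output aggregator) collects information about the final position in order to check conditions (a) and (b), namely $r(|w|)=c(|w|)$, $k_1$ encodes $t_I$, and $k_{|w|}$ encodes $t_F$.

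The combination functions $\comb$ and the output function $\out$, which are classical \relu{}-FNNs in $\transCundec$, are then used to implement the following logic at every position: test whether the pair $(k_i,k_{i+1})$ lies in $H$ when $c(i)<r(i)$, whether $(k_i,k_{i+r(i)+1})$ lies in $V$ when the vertical neighbour exists, and whether the global boundary conditions are met. Each of these Boolean tests on a finite lookup table is easily encoded by an \relu{}-FNN. The output is set to $1$ iff all local tests succeed and all global conditions hold, and to $0$ otherwise; since only the last symbol's output determines $T_\mathcal{S}(w)$, I aggregate per-position violations by a single nonzero "error" coordinate that is broadcast via the residual connection and an attention head using the sum-position feature from the embedding (detecting whether any earlier violation bit was set).

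The main obstacle, and the place where care is needed, is the behaviour of the attention heads from Lemma~\ref{sec:undec;lem:lin} at positions where the target index either does not exist or equals a non-integer boundary (for tiles on the diagonal $c(i)=r(i)$ the horizontal head should not impose a constraint, and for tiles on the last row the vertical head attends to the nearest valid position rather than to a true neighbour). I would handle this by guarding each constraint with a mask computed from $r(i),c(i),|w|$ via the \relu{}-FNN, so that spurious values returned at the boundary are ignored. Once these masks are in place, the reduction is faithful: $T_\mathcal{S}$ accepts $w$ iff $w$ satisfies (a)--(d), whence decidability of $\empt[\transCundec]$ would decide \unbOTWP, contradicting its undecidability.
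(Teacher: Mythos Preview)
Your proposal is correct and follows essentially the same approach as the paper: reduce \unbOTWP, use Lemma~\ref{sec:undec;lem:decode} for the first two layers, use Lemma~\ref{sec:undec;lem:lin} in a third layer to fetch the horizontal neighbour ($i{+}1$) and the vertical neighbour ($i{+}r(i){+}1$), implement the finite table checks and boundary masks with \relu{}-FNNs in the combination, and aggregate all local error flags into the last position via one more attention layer. The paper's construction differs only in bookkeeping: it adds a dedicated ``previous'' head $\att_{\mathit{prev}}$ (with $f(\bs{x}_i)=i-1$) whose sole purpose is to detect position~$1$ by self-attention so that the $t_I$ check can be handled uniformly, and it makes the aggregation explicit as a head $\att_{\text{leq}}$ attending to $\{1,\dots,i\}$ rather than your more loosely described ``sum-position'' broadcast; both variants accomplish the same thing.
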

\begin{proof}[Proof Sketch]
	We establish a reduction from $\unbOTWP$ to $\empt[\transCundec]$ by constructing
	for each instance $\mathcal{S} = (S, H, V, t_I, t_F)$ of $\unbOTWP$ an TE $T_{\mathcal{S}}$ accepting
	exactly those $w$ corresponding to a valid encoded-tiling for $\mathcal{S}$.
	
	$T_{\mathcal{S}}$ uses the positional embedding described in the beginning of Section~\ref{sec:undec} and has four layers. Layers $l_1$ and $l_2$ are given by Lemma~\ref{sec:undec;lem:decode} and are used to decode the row and column
	indexes corresponding to a potential octant tiling for each symbol in a given word $w$.
	Layer $l_3$ uses the informations encoded by the embedding and the decoded
	row and column indexes to check whether properties (a) to (d) described above hold
	for $w$. The necessary informations are aggregated using three attention heads 
	$\att_{\mathit{prev}}$, $\att_{\mathit{next}}$ and $\att_{\mathit{step}}$, each built 
	according to Lemma~\ref{sec:undec;lem:lin}.Thereby, $\att_{\mathit{prev}}$ attends
	each position to its predecessor, but the first position attends to itself. 
	This allows to clearly identify the vector corresponding to the first position in $w$ and
	check whether this is equal to tile $t_I$.
	Attention head $\att_{\mathit{next}}$ attends each position to its successor, but the last 
	position attends to itself. This allows to clearly identify the vector corresponding to the last 
	position in $w$, in order to check whether this is equal to $t_F$, and to check conditions 
	given by $H$. Attention head $\att_{\mathit{step}}$ attends each position to the position
	with the same column index but the successive row index. If there is no such successive row it attends to the last position. This allows to check whether conditions given by $V$
	holds. Each of these conditions is checked in the combination function of $l_3$, using
	specifically built feed-forward neural networks outputting $0$ to some predefined vector dimension if and only if the condition is met. Finally, layer $l_4$ aggregates the information 
	of all positions in the vector corresponding to the last position using attention head $\att_{\text{leq}}$, again given by Lemma~\ref{sec:undec;lem:lin}. 
	The correctness of this reduction follows from the detailed construction of $T_{\mathcal{S}}$, given in Appendix~\ref{app:undec_proofs}.
\end{proof}

Next, we consider the class $\logtransCundec$ which is defined exactly like $\transCundec$ 
but for all $T \in \logtransCundec$ working over alphabet $\Sigma$ and all words 
$w$ with $|w| = n$ we assume that $T(w)$ is carried out in some fixed-width arithmetic 
$F$ using $\mathcal{O}(\log(\max(|\Sigma|, n)))$ bits.
\begin{theorem}
	\label{sec:undec;thm:logundec}
	The decision problem $\empt[\logtransCundec]$ is undecidable.
\end{theorem}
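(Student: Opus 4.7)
The plan is to observe that the reduction from \unbOTWP used in the proof of Theorem~\ref{sec:undec;thm:undec} already produces a log-precision TE. For each tiling system $\mathcal{S}$, the TE $T_{\mathcal{S}}$ constructed there uses only polynomially bounded intermediate quantities and resolves all discrete decisions by constant gaps, so it can be evaluated in any fixed-width arithmetic with $\mathcal{O}(\log n)$ bits per number on inputs of length $n$ without changing its accept/reject behaviour. This gives $T_{\mathcal{S}} \in \logtransCundec$ and hence the same reduction witnesses undecidability of $\empt[\logtransCundec]$.

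I would verify the claim by walking through the components of $T_{\mathcal{S}}$ on an input $w$ of length $n$. The embedding stores $i$ and $\sum_{j \leq i} j = i(i+1)/2$, both in $\mathcal{O}(n^2)$; Lemma~\ref{sec:undec;lem:decode} then yields $r(i), c(i) \leq \sqrt{2n}$; the linear forms inside the scoring functions of Lemma~\ref{sec:undec;lem:lin} have $\mathcal{S}$-dependent constant coefficients applied to $\mathcal{O}(n)$-sized inputs, hence produce $\mathcal{O}(n)$-sized outputs; the \relu-based FNN sub-components (in scoring, combination, and output) have constant size and therefore preserve polynomial magnitude; and the $\hmax$ outputs $1/m$ with $m \leq n$ are representable with precision $1/n^{\mathcal{O}(1)}$. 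All of these fit comfortably in $\mathcal{O}(\log n)$ bits, so overflow never occurs.

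The main obstacle I expect is showing that rounding does not silently change which positions are selected by a hardmax attention head or whether the final output crosses the acceptance threshold. The argument here hinges on the specific form of the scoring functions from Lemma~\ref{sec:undec;lem:lin}, which are essentially negated integer distances: the winning score beats the next-highest one by at least $\tfrac{1}{2}$, a gap that survives rounding at any precision $\Omega(1)$, let alone $\mathcal{O}(\log n)$ bits. Similarly, the final acceptance test, built from the FNN that checks the validity conditions (a)--(d) used in the proof of Theorem~\ref{sec:undec;thm:undec}, produces $0$ on valid encoded tilings and is separated from $0$ by a constant on invalid ones, so it too is robust to rounding. Once this robustness is in place, the reduction from Theorem~\ref{sec:undec;thm:undec} transfers verbatim, and $\empt[\logtransCundec]$ inherits undecidability from \unbOTWP.
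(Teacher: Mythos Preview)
Your overall strategy matches the paper's: reuse the reduction from Theorem~\ref{sec:undec;thm:undec} and argue that $T_{\mathcal{S}}$ computes correctly under $\mathcal{O}(\log n)$-bit arithmetic because all intermediate magnitudes are polynomial in $n$ and all discrete decisions are robust to rounding. The paper does exactly this, fixing an arithmetic with roughly $4\log(\max(|S|,n))$ bits split between integer and fractional parts.

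There is, however, a gap in your robustness argument. You handle the attention heads of layer~$l_3$ (those built via Lemma~\ref{sec:undec;lem:lin}) correctly: their scores are negated integer distances with a gap of at least~$1$ between the winner and the runner-up, so rounding is harmless. But the decoding layers $l_1,l_2$ from Lemma~\ref{sec:undec;lem:decode} are \emph{not} of this form, and this is where the paper's proof concentrates its effort. In $l_1$ the pooling produces $1/l$ for some $l \le \sqrt{n}$, and one must argue that the rounded values $\lfloor 1/l\rfloor_F$ remain pairwise distinct. In $l_2$ the scoring is essentially $-|(1/l)\cdot j - 1|$, so the margin between the correct $j=l$ and its neighbours is $1/l \approx 1/\sqrt{n}$, \emph{not} a constant. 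Your ``constant gap'' claim therefore does not cover this head; one needs the finer observation that with $\log n$ fractional bits the rounding error on $1/l$ is at most $1/n$, which is dominated by the $1/\sqrt{n}$ gap, so $j=l$ still maximises the (rounded) score. Once this is patched, your argument and the paper's coincide.
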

\begin{proof}[Proof sketch]
	This proof follows the exact same line as the proof of Theorem~\ref{sec:undec;thm:undec}.
	Additionally, we need to argue that $T_{\mathcal{S}}$ works as intended, despite 
	the fact that it is limited by some log-precision $F$. 
	
	Looking at the proof of Theorem~\ref{sec:undec;thm:undec}, it is imminent 
	that the magnitude and precision of all values used and produced in the computation $T_{\mathcal{S}}(w)$ depend polynomially on $n$ and, thus, we can choose the representation of $F$
	to be linear in $\log(n)$, which avoids any overflow or rounding situations and ensures
	that $T_{\mathcal{S}}$ works as intended.	
	A formal proof is given in Appendix~\ref{app:undec_proofs}.
\end{proof}

\section{How to make transformer encoder satisfiability decidable}
\label{sec:dec}

In this section we investigate classes of TE leading to decidable $\empt$ problems or decidable restrictions of it.
Additionally, we establish corresponding complexity bounds.

In order to establish clearly delineated upper complexity bounds, we need to bound the  
representation size of a TE $T$. Instead of tediously analyzing the space needed to represent embedding, scoring, pooling, combination 
and normalisation functions, we note that it suffices to estimate the size up to polynomials only. The \emph{complexity} of a TE $T$ 
with $L$ layers and $h_i$ attention heads in layer $i$, working on inputs over alphabet $\Sigma$, is $|T| := |\Sigma| + L + H + D$ where 
$H := \max \{ h_i \mid 1 \le i \le L \}$ and $D$ is the maximal dimensionality of vectors occurring in a computation of $T$. Note that one
can reasonably assume the \emph{size} of a syntactic representation of $T$ to be polynomial in $|T|$, and that TE have the
\emph{polynomial evaluation property}: given a word $w \in \Sigma^+$, $T(w)$ can be computed in time that is polynomial in $|T| + |w|$.
Section~\ref{sec:overview} discusses why this assumption is reasonable.

We start with a natural restriction: bounding the word length. Let $\transC$ be a class of TE. The \emph{bounded satisfiability problem}, denoted by $\bempt[\transC]$ is: given $T \in \transC$ 
and some $n \in \nats$, decide whether there is a word $w$ with $|w| \leq n$ such that $T(w) = 1$. It is not hard
to see that $\bempt[\transC]$ is decidable. However, its complexity depends on the value of $n$, and we therefore
distinguish whether $n$ is represented in \emph{binary} or \emph{unary} encoding. We denote the problems as
$\bempt_{\mathsf{bin}}[\transC]$ and $\bempt_{\mathsf{un}}[\transC]$.
 
\begin{theorem}
	\label{sec:dec;thm:boundedword}
	Let $\transC$ be a class of TE. Then
	\begin{enumerate}
		\item $\bempt_{\mathsf{un}}[\transC]$ is in NP; if 
		$\transCundec \subseteq \transC$ then $\bempt_{\mathsf{un}}[\transC]$ is NP-complete,
		\label{sec:dec;thm:boundedword;prop:NP}
		\item $\bempt_{\mathsf{bin}}[\transC]$ is in NEXPTIME;
		if $\transCundec \subseteq \transC$ then $\bempt_{\mathsf{bin}}[\transC]$ is NEXPTIME-complete.
		\label{sec:dec;thm:boundedword;prop:NEXPTIME}
	\end{enumerate}
\end{theorem}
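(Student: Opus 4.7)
The plan for the upper bounds is a standard guess-and-check argument. Given an input $(T, n)$, I would design a nondeterministic algorithm that first guesses a word $w \in \Sigma^+$ with $|w| \leq n$, requiring at most $n \lceil \log |\Sigma| \rceil$ bits, and then evaluates $T(w)$ via the polynomial evaluation property, i.e.\ in time polynomial in $|T| + |w|$. When $n$ is encoded in unary, $|w|$ is polynomial in the input size and both steps run in polynomial time, placing $\bempt_{\mathsf{un}}[\transC]$ in NP. When $n$ is encoded in binary, $|w|$ may be exponential in the input size and both steps take exponential time, placing $\bempt_{\mathsf{bin}}[\transC]$ in \NEXPTIME.

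For the lower bounds, the plan is to reuse the construction from Theorem~\ref{sec:undec;thm:undec} but starting from \emph{bounded} variants of the octant tiling-word problem. Given a tiling system $\mathcal{S}$ together with a bound $k$ on the side length of the octant, the construction from Theorem~\ref{sec:undec;thm:undec} already yields a TE $T_\mathcal{S} \in \transCundec$ of size polynomial in $|\mathcal{S}|$ that accepts exactly the words encoding valid octant tilings for $\mathcal{S}$. Since an octant tiling of side length $k$ corresponds to a word of length $k(k+1)/2$, I would then query $\bempt[\transC]$ on the instance $(T_\mathcal{S}, k(k+1)/2)$. To conclude, I would appeal to two facts about bounded octant tiling, analogous to the well-known results for bounded square tiling: with $k$ given in unary it is NP-complete (by a reduction from 3-SAT), and with $k$ given in binary it is \NEXPTIME-complete (via the classical Lewis-style encoding of the computation tableau of a \NEXPTIME-bounded Turing machine). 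Unary (resp.\ binary) encoding of $k$ yields $n = k(k+1)/2$ polynomial (resp.\ exponential) in the input, which matches the unary (resp.\ binary) encoding of $n$ required by the target problem.

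The main obstacle I anticipate is establishing \NEXPTIME-hardness of the binary-bounded octant tiling-word problem in a fully self-contained way. The unary NP-hardness is routine, but the \NEXPTIME-hardness proof requires adapting the classical square-grid construction to the triangular octant, taking care that the designated start and end tiles $t_I$, $t_F$ on the diagonal still correctly encode the initial and accepting configurations of the simulated Turing machine. Once the tiling-side complexity is settled, the transformer-side reductions are immediate corollaries of the construction behind Theorem~\ref{sec:undec;thm:undec}, since $|T_\mathcal{S}|$ remains polynomial in $|\mathcal{S}|$ irrespective of the bound $k$.
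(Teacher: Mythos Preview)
Your proposal is correct and follows essentially the same approach as the paper: guess-and-check for the upper bounds, and for the lower bounds a polynomial reduction from the bounded octant tiling-word problem that reuses the $T_{\mathcal{S}}$ construction of Theorem~\ref{sec:undec;thm:undec}. The paper likewise defers the tiling-side complexity (NP- resp.\ \NEXPTIME-completeness of $\bOTWP_{\mathsf{un}}$ and $\bOTWP_{\mathsf{bin}}$) to a separate appendix result proved via Turing-machine encodings, which matches the obstacle you flag; your explicit word-length bound $k(k+1)/2$ is in fact slightly more careful than the paper's reduction, which passes the height parameter through unchanged.
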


\begin{proof}[Proof Sketch]
The decidability result of statement (\ref{sec:dec;thm:boundedword;prop:NP}) can be shown using a simple guess-and-check argument: given $n \in \nats$, 
guess a word $w \in \Sigma^+$ with $|w| \le n$, compute $T(w)$ and check that the result is $1$. This is possible in time polynomial
in $|T| + n$ using the polynomial evaluation property. Note that $|T|$ depends on $|\Sigma|$, thus this also respects the actual representation size of $w$. Moreover, the value of $|T| + n$ is polynomial in the size needed to represent
$n$ in unary encoding. 
The decidability result of statement (\ref{sec:dec;thm:boundedword;prop:NEXPTIME}) is shown along the same lines. However, if the value $n$ is encoded binarily
then this part of the input is of size $\log n$, and $|T|+n$ becomes exponential in this. Hence, the guess-and-check procedure only proves
that $\bempt_{\mathsf{bin}}[\transC] \in$ NEXPTIME.

For the completeness result in (\ref{sec:dec;thm:boundedword;prop:NP})
it suffices to argue that the problem is NP-hard. We use that TE in $\transCundec$ 
are expressive enough to accept a given word $w$ if and only if it is a valid encoded tiling, cf.\ Section~\ref{sec:undec} for details.
It is possible to establish NP-hardness of a restriction of the octant word-tiling problem, namely the 
\emph{bounded octant word-tiling problem} (for unarily encoded input values). See Appendix~\ref{app:tiling} for details on tiling problems. 
It remains to observe that the construction
in Theorem~\ref{sec:undec;thm:undec} is in fact a polynomial-time reduction, and that it reduces the bounded octant word-tiling 
problem to the bounded satisfiability problem. The argument for NEXPTIME-hardness in statement (\ref{sec:dec;thm:boundedword;prop:NEXPTIME})
works the same with, again, the bounded octant-word tiling problem shown to be NEXPTIME-hard when the input parameter $n$
is encoded binary.
A full proof for Theorem~\ref{sec:dec;thm:boundedword} is in Appendix~\ref{app:dec_proofs}.
\end{proof}


We turn our attention to classes of TE that naturally arise in practical contexts. We consider TE that work over some 
fixed-width arithmetic, like fixed- or floating-point numbers, and which have an embedding relying on a periodical encoding of positions.
We start with establishing a scenario where $\empt$ is decidable in \NEXPTIME. Regardless of the underlying TE class $\transC$,
our proof strategy always relies on a certifier-based understanding of \NEXPTIME: given $T \in \transC$, we nondeterministically
guess a word $w$, followed by a deterministic certification whether $T(w) = 1$ holds. For this to show $\empt[\transC] \in \NEXPTIME$, 
we need to argue that the overall running time of such a procedure is at most exponential, in particular that whenever there is a 
word $w$ with $T(w)=1$ then there is also some $w'$ with $T(w') = 1$ and $|w'| \leq 2^{\poly(|T|)}$. Again, we rely on the polynomial
evaluation property of TE in $\transC$, i.e.\ the fact that $T(w')$ can be computed in time polynomial in $|T| + |w'|$. 

We consider the class of TE $\periodTransCfix$, defined by placing restrictions on the positional embedding of an TE
$T$ to be \emph{additive-periodical} which means that $\emb(a,i) = \emb'(a) + \pos(i)$ where $\pos$ is periodical, i.e.\ there is 
$p \geq 1$ such that $\pos(i) = \pos(i+p)$ for all $i \in \nats$. Additionally, all normalisation functions are realised by 
either the softmax function $\softmax$ or the hardmax function $\hmax$. Moreover, we assume that all computations occurring in $T$ 
are carried out in some fixed-width arithmetic, encoding values in binary using a fixed number $b \in \nats$ of bits.
Aside from technical reasons, we motivate the choice of $\periodTransCfix$ in Section~\ref{sec:overview}. Given these restrictions,
we adjust the definition of the complexity of $T \in \periodTransCfix$ as a measure of the size (up to polynomials) as 
$|T| := |\Sigma| + L + H + D + p + b$.

\begin{lemma}
	\label{sec:decidability;lem:shortword_softmax}
	There is a polynomial function $\poly\colon \nats \to \nats$ such that for all $T \in \periodTransCfix$ and all words $w$ with $T(w) = 1$ there 
	is word $w'$ with $T(w') = 1$ and $|w'| \leq 2^{\poly(|T|)}$.  
\end{lemma}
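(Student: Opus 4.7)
The plan is to establish a pumping-style lemma: starting from any word $w$ with $T(w) = 1$, I iteratively remove a block of consecutive positions of length a multiple of $p$ while preserving $T(w) = 1$, until $|w| \le 2^{\poly(|T|)}$.

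First I would observe that because the positional embedding is additive-periodic with period $p$ and all arithmetic is carried out in a fixed-width representation (values from a set $\fpaset$ of size $\le 2^b$, dimensionality $\le D$), the embedded input vector $\bs{x}^0_i = \emb'(a_i) + \pos(i)$ depends only on the pair $(a_i,\, i \bmod p)$, so only $N_0 \le \min(|\Sigma| \cdot p,\ 2^{bD})$ distinct embedded vectors ever occur. Moreover, removing a block of $kp$ consecutive positions preserves every surviving position's embedded vector by periodicity of $\pos$; in particular, it preserves the last position whenever we cut from the interior of $w$. Using that softmax and hardmax are multiset-symmetric (the output at a position depends only on its own query and on the \emph{multiset} of key/value vectors at all positions), I would then prove by induction on $\ell$ that at every layer the vectors $\bs{x}^\ell_m$ taken over positions $m$ also take at most $N_0$ distinct values, since $\bs{x}^\ell_m$ depends only on $\bs{x}^{\ell-1}_m$ and on the global multiset $\gM^{\ell-1}$, which is fixed by the input multiset $\gM^0$.

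Second, and this is the crux, I would introduce a Myhill--Nerode-style congruence on multisets: call $\gM_1 \equiv \gM_2$ iff for every extension multiset $A$, running $T$ on any word with input multiset $\gM_1 + A$ versus $\gM_2 + A$ (and any fixed last-position vector $q_0$) produces the same final $\bs{x}^L_n$. The key claim is that $\equiv$ has at most $2^{\poly(|T|)}$ classes. Since the computation depends on the input only through the count vector $\vec{c} \in \nats^{N_0}$ (a vector with support of size at most $N_0$ by step one), it suffices to exhibit a saturation threshold $M \le 2^{\poly(b)}$ such that the per-layer pooling output becomes invariant once each coordinate $c_q$ exceeds $M$ — the running sum driving softmax or the maximum driving hardmax stabilises under fixed-width quantisation after finitely many accumulations — and then to propagate this stabilisation through all $L$ layers. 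The saturated normal form of a count vector then lives in $\{0, \ldots, M\}^{N_0}$, of size at most $2^{\poly(|T|)}$, which bounds the number of $\equiv$-classes.

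Third, I would apply pigeonhole. Write $|w| = Kp + r$ with $r < p$, partition the first $Kp$ positions into $K$ blocks of length $p$, and let $M_k$ be the multiset of embedded input vectors at the first $kp$ positions, for $0 \le k \le K$. If $K + 1$ exceeds the number of $\equiv$-classes, then $M_{k_1} \equiv M_{k_2}$ for some $k_1 < k_2$. Let $w'$ be obtained from $w$ by deleting positions $k_1 p + 1, \ldots, k_2 p$; its input multiset is $M_{k_1} + (\gM^0 - M_{k_2})$, its last position's embedded vector equals that of $w$ by periodicity, and applying $\equiv$ with $A := \gM^0 - M_{k_2}$ yields $T(w') = T(w) = 1$. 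Iterating shrinks $|w|$ to at most $p \cdot 2^{\poly(|T|)} = 2^{\poly(|T|)}$. The main obstacle is the equivalence-count bound in the second step. For hardmax it is immediate, since pooling depends only on which key vectors realise the maximum score, so $M = 1$ suffices. For softmax it is delicate: the output depends on ratios $c_q \exp(s_q) / \sum_{q'} c_{q'} \exp(s_{q'})$ rather than on raw counts, so the saturation argument must carefully exploit that the denominator's running sum saturates in fixed-width arithmetic after polynomially many additions, and that this quantisation-invariance propagates consistently across all $L$ layers so that the multi-layer computation at every surviving position is preserved.
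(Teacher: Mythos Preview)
Your overall strategy---periodicity preserves embeddings under block removal, pigeonhole on prefixes, cut and iterate---matches the paper's, and your Myhill--Nerode framing is a clean repackaging of what the paper does via a separate cut-out lemma. The gap is in step two, where you bound the number of $\equiv$-classes via a saturation threshold.

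Your hardmax claim is wrong: pooling under $\hmax$ does \emph{not} depend only on the set of key vectors realising the maximum. If distinct vectors $v_1, v_2$ both attain the maximum with multiplicities $c_1, c_2$, the pooled output is $\tfrac{c_1}{c_1+c_2} W v_1 + \tfrac{c_2}{c_1+c_2} W v_2$, which depends on the ratio $c_1:c_2$, so $M=1$ fails and hardmax is just as delicate as softmax. For softmax you already leave the argument open; whether the fixed-width ratios $c_q e^{s_q} / \sum_{q'} c_{q'} e^{s_{q'}}$ eventually stabilise as counts grow is not obvious and hinges on overflow and rounding semantics you have not fixed. The paper sidesteps saturation entirely: it pigeonholes not on your $\equiv$ but on the tuple of \emph{all} normalisation and pooling values $N(\bs{x}, \bs{x}', i, j)$, $P(\bs{x}, i, j)$ that a prefix induces, ranging over every $(\bs{x}, \bs{x}', i, j)$ arising in the full computation $T(w)$. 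There are at most $(|\Sigma| p L H)^4$ such tuples and at most $2^{bD}$ values each can take, giving at most $2^{bD(|\Sigma| p L H)^4} = 2^{\poly(|T|)}$ signatures; a dedicated cut-out lemma then shows that coincidence of signatures on two prefixes propagates under any common extension---the Myhill--Nerode property you want, obtained without any saturation claim.
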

\begin{proof} [Proof Sketch]
The polynomial $\poly$ can be chosen uniformly for all $T \in \periodTransCfix$ because 
for all positional embeddings of TE in $\periodTransCfix$ there is an upper bound on the
period and on the bit-width in the underlying arithmetic.
The small-word property stated by the lemma is then shown by arguing, given polynomial $\poly$, TE $T$ and $|w| > 2^{\poly(|T|)}$,  
that $w$ contains unnecessary subwords $u$ that can be cut out without changing the output in $T$. Here, we exploit the fact $T$ has some periodicity $p$ and only consider those $u$ whose length is a multitude of $p$. This ensures that the resulting word
$w'$, given by $w$ without $u$, is embedded the same way as $w$ by the positional
embedding of $T$.
The existence of such subwords follows from $T$'s limited distinguishing capabilities, 
especially in its normalisations, due to the bounded representation size of numerical values
possible in the underlying fixed-width arithmetic.
A formal proof relies
on basic combinatorial arguments and given in Appendix~\ref{app:dec_proofs}. 
\end{proof}


\begin{theorem}
	\label{sec:decidability;thm:nexptime}
	$\empt[\periodTransCfix]$ for additive-periodical TE over fixed-width arithmetic is in \NEXPTIME.
\end{theorem}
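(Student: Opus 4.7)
The plan is to give a standard guess-and-verify argument that leverages Lemma~\ref{sec:decidability;lem:shortword_softmax} together with the polynomial evaluation property that TE in $\periodTransCfix$ inherit as computable TE. First I would fix the polynomial $\poly$ from Lemma~\ref{sec:decidability;lem:shortword_softmax}, so that for every $T \in \periodTransCfix$ the existence of any accepting word implies the existence of one of length at most $N := 2^{\poly(|T|)}$. This reduces $\empt[\periodTransCfix]$ to the search for accepting words in a finite (though exponentially large) candidate set.

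Given $T$ as input, the nondeterministic algorithm then proceeds in two phases. In the guessing phase, it nondeterministically writes down a word $w' \in \Sigma^+$ of length at most $N$; since $|\Sigma| \le |T|$, each symbol takes $O(\log |T|)$ bits to encode, and the total guess uses at most $N \cdot O(\log |T|) = 2^{\poly(|T|)}$ nondeterministic bits. In the verification phase, it deterministically evaluates $T(w')$ using the polynomial evaluation property, which takes time polynomial in $|T| + |w'|$, and therefore at most $2^{\poly'(|T|)}$ for some polynomial $\poly'$. The algorithm accepts iff $T(w') = 1$.

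Correctness is immediate: soundness holds because acceptance of the guess witnesses a satisfying input; completeness follows from Lemma~\ref{sec:decidability;lem:shortword_softmax}, which guarantees that whenever $T$ accepts some word, it also accepts one within the length bound $N$, so at least one nondeterministic branch succeeds. Since both phases run in time bounded by $2^{\poly(|T|)}$, the overall running time is single-exponential in $|T|$, placing $\empt[\periodTransCfix]$ in \NEXPTIME.

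Since Lemma~\ref{sec:decidability;lem:shortword_softmax} already delivers the small-word property, there is no genuinely hard step remaining; the only subtlety is to verify that the fixed-width arithmetic assumption does not inflate the cost of evaluating $T(w')$ beyond exponential. This is absorbed into the polynomial evaluation property, since the bit-width $b$ is a constant contribution to $|T|$ and each arithmetic operation on values of width $b$ is performed in time polynomial in $|T|$, so that the per-position, per-layer work remains polynomial and the total work stays polynomial in $|T| + |w'|$, as required.
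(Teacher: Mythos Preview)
Your proposal is correct and follows essentially the same approach as the paper: a guess-and-check argument where Lemma~\ref{sec:decidability;lem:shortword_softmax} bounds the witness length exponentially in $|T|$, and the polynomial evaluation property ensures verification fits within exponential time. Your write-up is slightly more detailed (explicit bit accounting, separate soundness/completeness discussion, a remark on arithmetic cost), but the underlying argument is identical to the paper's.
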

\begin{proof}
    Let $T \in \periodTransCfix$ working over alphabet $\Sigma$.
	We use a certifier-based understanding of a nondeterministic exponential-time algorithm as follows:
	We (a) guess an input $w \in \Sigma^+$ and (b) compute $T(w)$ to check whether $T(w)=1$. 
	For correctness, we need to argue that the length of $w$ 
	is at most exponential in $|T|$. This argument is given by 
	Lemma~\ref{sec:decidability;lem:shortword_softmax}.
	Note that via assumption we have that $T(w)$ can be computed in 
	polynomial time regarding $|T|$ and $|w|$. 
\end{proof}

Next, we address the goal of obtaining a matching lower bound, i.e.\ \NEXPTIME-hardness. An obvious way to do so would be to follow 
Theorem~\ref{sec:dec;thm:boundedword}.\ref{sec:dec;thm:boundedword;prop:NEXPTIME} and form a reduction from the 
bounded octant word-tiling problem.
Hence, given a tiling system $\mathcal{S}$ and $n \in \nats$ encoded binarily, we would have to construct -- in time
polynomial in $|\mathcal{S}| + \log n$ -- an TE $T_{\mathcal{S},n} \in \periodTransCfix$ such that $T_{\mathcal{S},n}(w) = 1$
for some $w \in \Sigma^+$ iff there is a word $w = t_{1,1},t_{2,1},t_{2,2},t_{3,1},\ldots,t_{n,n}$ representing a valid 
$\mathcal{S}$-tiling. In particular, $T_{\mathcal{S},n}$ would have to be able to recognise the correct word length and 
reject input that is longer than $|w| = \frac{n(n+1)}{2}$. 
This poses a problem for TE with periodical embeddings. To recognize whether a word is too long, 
an TE $T$ must ultimately rely on its positional embedding, which seems to make a periodicity of $p \geq \frac{n(n+1)}{2}$ necessary. 
Since the size of periodical TE is linear in $p$, we get an exponential blow-up in a potential reduction 
of $\bOTWP_{\mathsf{bin}}$ to $\empt[\periodTransCfix]$, given that the values of $\frac{n(n+1)}{2}$ and already $n$ are
exponential in the size of a binary representation of $n$. This problem vanishes when the requirement of the underyling
positional embedding to be periodical is lifted: allowing for arbitrary TE, working over some fixed-width arithmetic, leads
to an \NEXPTIME-hard satisfiability problem. Let $\transC^\textsc{fix}$ be defined similar to $\periodTransCfix$, but we allow for 
arbitrary embeddings. Furthermore, we assume that the considered fixed-width arithmetics can handle overflow situations
using saturation.
\begin{theorem}
	\label{sec:decidability;thm:nexptimehard}
	$\empt[\transCfix]$ for TE over fixed-width arithmetic is decidable and \NEXPTIME-hard.
\end{theorem}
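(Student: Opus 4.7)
The plan is to prove the two halves of the theorem independently. Decidability will follow by arguing that every admissible embedding, when evaluated in fixed-width arithmetic, is \emph{eventually periodic}, which lets us import the small-word property from Lemma~\ref{sec:decidability;lem:shortword_softmax} essentially verbatim. Hardness will follow from a reduction from the binary-encoded bounded octant word-tiling problem $\bOTWP_{\mathsf{bin}}$, the key new ingredient being that arbitrary (non-periodic) embeddings can store a binary position counter of width $b$, which permits length checks up to $2^b$.

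For decidability, fix $T \in \transCfix$ with alphabet $\Sigma$, embedding dimension $d_0$ and bit-width $b$. For each $a \in \Sigma$, the map $i \mapsto \emb(a,i)$ takes values in the finite set $(\fpaset)^{d_0}$ of size at most $2^{b d_0}$, so the sequence becomes eventually periodic: there exist $N_a, P_a \le 2^{b d_0}$ with $\emb(a,i) = \emb(a, i+P_a)$ for all $i \ge N_a$. Setting $N = \max_a N_a$ and $P = \mathrm{lcm}_a P_a$, both bounded by $2^{\poly(|T|)}$, the embedding is $P$-periodic beyond position $N$. This reproduces the structural hypothesis used in the proof of Lemma~\ref{sec:decidability;lem:shortword_softmax}: any sufficiently long accepted input contains, past position $N$, a subword of length a multiple of $P$ whose removal does not change any embedded position, hence does not change $T(w)$. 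The guess-and-check procedure from Theorem~\ref{sec:decidability;thm:nexptime} then gives a decision algorithm (even in \NEXPTIME, although we only need decidability here).

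For \NEXPTIME-hardness, reduce from $\bOTWP_{\mathsf{bin}}$: given a tiling system $\mathcal{S}$ and $n$ in binary we build, in time polynomial in $|\mathcal{S}| + \log n$, a TE $T_{\mathcal{S},n} \in \transCfix$ over a fixed-width arithmetic of bit-width $b = \Theta(\log n)$ that accepts a word $w$ iff $w$ encodes a valid $\mathcal{S}$-tiling of side $n$. The construction mirrors the layered architecture of Theorem~\ref{sec:undec;thm:undec}, but the positional embedding is no longer required to accumulate sums; instead $\emb(a_k,i)$ stores the \emph{binary representation} of $i$ in $b$ bits together with $k$. With positions available explicitly, the row/column decoding from Lemma~\ref{sec:undec;lem:decode} and the linear-target attention heads from Lemma~\ref{sec:undec;lem:lin} can be realised as small FNNs computing on the $b$-bit position field, since all arithmetic needed to check the $H$-, $V$-, initial-, final-, and length-$\frac{n(n+1)}{2}$ conditions fits within $b = O(\log n)$ bits (so no saturation occurs on legal computations). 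The final output layer accepts precisely when all conditions hold, yielding a polynomial-size TE.

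The main obstacle is the last step: making the length check $|w| = \tfrac{n(n+1)}{2}$ enforceable with a non-periodic but fixed-width-representable embedding. The earlier discussion explains why periodicity forbids this; the remedy here is that with $b = \Theta(\log n)$ bits the position index itself does not saturate until well past $n(n+1)/2$, so the embedding can faithfully distinguish position $n(n+1)/2$ from its neighbours and the final attention head can verify that the last position carries the target index. Care must also be taken to ensure that the $\mathrm{lcm}$-style eventual period of the embedding produced by this binary counter is benign — but since we only use the embedding at positions at most polynomially below saturation, correctness is preserved. Once these arithmetic details are checked, composing with the tiling-checking layers of Theorem~\ref{sec:undec;thm:undec} completes the reduction.
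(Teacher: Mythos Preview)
Your decidability argument contains a logical gap: a sequence taking values in a finite set is \emph{not} in general eventually periodic (think of the Thue--Morse sequence, or any Sturmian word). What actually forces periodicity is not the finite \emph{range} of $\emb(a,\cdot)$ but the finite \emph{domain}: since the position index $i$ is itself represented in the fixed-width arithmetic, only $2^b$ distinct values of $i$ are available to the embedding. Under wrap-around this yields period at most $2^b$; under saturation it yields a prefix of length at most $2^b$ followed by period $1$. This is exactly how the paper argues it, and once stated correctly your import of Lemma~\ref{sec:decidability;lem:shortword_softmax} goes through.

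For hardness, your approach diverges from the paper's and the divergence is where the gap lies. The paper does \emph{not} change the embedding: it reuses the construction of $T_{\mathcal{S}}$ from Theorem~\ref{sec:undec;thm:undec} verbatim, including the embedding component $\sum_{j=0}^{i} j$, and simply adjoins two small gadgets $N_e,N_f$ to $\comb_3$ that check (i) the last row index equals $n$ and (ii) no position id equals $\frac{(n+1)(n+2)}{2}+1$. Saturation with $b=\Theta(\log n)$ bits guarantees this id, if present, is still below the saturation threshold and hence unique. You instead replace the embedding by one that stores only the binary representation of $i$, then assert that ``the row/column decoding from Lemma~\ref{sec:undec;lem:decode} and the linear-target attention heads from Lemma~\ref{sec:undec;lem:lin} can be realised as small FNNs computing on the $b$-bit position field.'' But Lemma~\ref{sec:undec;lem:decode} crucially exploits the $\sum_{j\le i} j$ component of the embedding together with a hardmax counting trick to invert Cantor's pairing; once that component is gone you would need a genuinely new mechanism to recover $r(i),c(i)$, and a ReLU FNN of polylogarithmic size computing this from the bits of $i$ is far from obvious (it hides a square-root/floor). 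As written, this step is an assertion rather than a construction. The cleanest fix is to do what the paper does: keep the original embedding (which already stores $i$ exactly in $F$) and add the two length-check gadgets.
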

\begin{proof}[Proof sketch]
	The decidability follows from the same arguments as in Theorem~\ref{sec:decidability;thm:nexptime}, 
	with the insight that even though $T \in \transCfix$ uses an arbitrary embedding, a fixed-width setting
	with $b$ bits enforces a periodicity of size at most $2^b$, assuming overflow is handled by wrap-around, or a 
	periodicity of size $1$ after a finite prefix of length up to $2^b$, assuming overflow is handled by saturation, 
	in the embedding.
	For the hardness, we establish a reduction from $\bOTP_{\mathsf{bin}}$ to $\empt[\transCfix]$ by 
	constructing, for each instance $(\mathcal{S},n)$ of $\bOTP_{\mathsf{bin}}$, an TE 
	$T_{\mathcal{S},n}$ working over some fixed-width arithmetic $F$, which accepts exactly those 
	$w$ with $|w| = \frac{n(n+1)}{2}$ corresponding to a valid word-encoded tiling for $\mathcal{S}$. 
	See Appendix~\ref{app:tiling} for details on tiling problems.
	The construction is similar to the one given for $T_\mathcal{S}$ in the proof of 
	Theorem~\ref{sec:undec}, but we need to enable $T_{\mathcal{S},n}$ to reject words 
	that are too long corresponding a polynomial bound dependent on $n$. 
	This implies that $T_{\mathcal{S},n}$, based on the positional embedding $\emb$ 
	specified in Section~\ref{sec:undec}, is able to check for all symbols if their respective 
	position is less than or equal to a predefined bound. This can be achieved with similar tools as used in Lemma~\ref{sec:undec;lem:lin}.
	Furthermore, we need to ensure that $T_{\mathcal{S},n}$ works as intended, despite the
	fact that it is limited by $F$. The arguments follow the same line as the proof of Theorem~\ref{sec:undec;thm:logundec}.
	A formal proof is given in Appendix~\ref{app:dec_proofs}.
\end{proof}

\section{Summary, limitations and outlook}
\label{sec:outlook}

We investigated the satisfiability problem of transformer encoders (TE). 
In particular, we considered the computational complexity of the satisfiability problem $\empt$ 
of TE in context of different classes of TE, forming a baseline for understanding possibilities and challenges 
of formal reasoning of transformers.
We showed that $\empt$ is undecidable for classes of TE recently considered in research on the expressiveness of different 
transformer models  (Theorem~\ref{sec:undec;thm:undec} and Theorem~\ref{sec:undec;thm:logundec}). This implies 
that formal reasoning is impossible as soon as we consider classes of TE that are at least as expressive as the classes
considered in these results. 
Additionally, we identified two ways to enable formal reasoning for TE: by bounding the length of inputs 
(Theorem~\ref{sec:dec;thm:boundedword}) or by considering quantized TE, where computations and 
parameters are limited by fixed-width arithmetic (Theorem~\ref{sec:decidability;thm:nexptime}). These 
imply that formal reasoning is possible for TE classes that are at most as expressive as those in our 
results. Thereby, we assume that TE expressiveness is the primary factor influencing computability or 
complexity bounds, rather than specific safety or interpretability assumptions. However, in both cases, $\empt$ remains 
computationally difficult (Theorems~\ref{sec:dec;thm:boundedword} 
and~\ref{sec:decidability;thm:nexptimehard}). Again, these results apply only to TE classes at least as 
expressive as those we considered.
While our results provide an initial framework for understanding the possibilities and challenges of formal 
reasoning for transformers, there is room for more detailed investigations. Our undecidability and hardness 
results rely on normalizations realized by the hardmax function, and it’s unclear whether similar results hold 
when using the commonly employed softmax function. Additionally, further exploration of the interplay between 
the embedding function and the internal structure of the TE is of interest. We expect that less expressive 
embeddings require a richer attention mechanism, but it’s unclear where the limits lie regarding the 
undecidability of the satisfiability problem. Regarding our decidability and upper complexity bounds, examining 
the specifics of particular fixed-width arithmetics could be practically beneficial. 
While this wouldn’t change our 
overall results, it could provide tighter time-complexity estimates valuable for certain formal reasoning 
applications.

\bibliography{references}
\bibliographystyle{iclr2025_conference}

\appendix


\section{Tiling Problems}
\label{app:tiling}

We make use of particular tiling problems in order to prove lower bounds on the complexity
and decidability of $\empt[\transC]$ for different classes $\transC$.

A \emph{tiling system} is an $\mathcal{S} = (S,H,V,t_I,t_F)$ where $S$ is a finite set; its elements
are called \emph{tiles}. $H,V \subseteq S \times S$ define a horizontal, resp.\ vertical 
matching relation between tiles, and $t_I,t_F$ are two designated \emph{initial}, resp.\ 
\emph{final} tiles in $S$. 

Problems associated with tiling systems are typically of the following form: given a discrete convex plain
consisting of cells with horizontal and vertical neighbors, is it possible to cover the plane with tiles
from $S$ in a way that horizontally adjacent tiles respect the relation $H$ and vertically adjacent tiles
respect the relation $V$, together with some additional constraints about where to put the initial and
final tile $t_I, t_F$. Such tiling problems, in particular for rectangular planes, have proved to be 
extremely useful in computational complexity, cf.\ (\cite{MR0216954,Boas97}), since they can be seen as 
abstract versions of halting problems.  

We need a variant in which the plane to be tiled is of triangular shape. The \emph{$n$-th triangle} is 
$\mathcal{O}_n = \{ (i,j) \in \nats \times \nats \mid j \le i \le n \}$ for $n > 0$. An 
($\mathcal{S}$)-\emph{tiling} of $\mathcal{O}_n$ is a function $\tau : \mathcal{O}_n \to S$ s.t.
\begin{itemize}
\item $(\tau(i,j),\tau(i,j+1)) \in H$ for all $(i,j) \in \mathcal{O}$ with $j < i \le n$, 
\item $(\tau(i,j),\tau(i+1,j)) \in V$ for all $(i,j) \in \mathcal{O}$ with $j \le i < n$. 
\end{itemize}
Such a tiling a \emph{successful}, if additionally $\tau(0,0) = t_I$ and $\tau(i,i) = t_F$ 
for some $(i,i) \in \mathcal{O}_n$.

The \emph{unbounded octant tiling problem} (\unbOTP) is: given a tiling system $\mathcal{S}$, decide whether a
successful $\mathcal{S}$-tiling of $\mathcal{O}_n$ exists for some $n \in \nats$. 
The \emph{bounded octant tiling problem} (\bOTP) is: given a tiling system $\mathcal{S}$ and an 
$n \ge 1$, decide whether a successful $\mathcal{S}$-tiling of $\mathcal{O}_{n}$ 
exists. Note that here, $n$ is part of the input, and that it can be represented differently, for example
in binary or in unary encoding. We distinguish these two cases by referring to $\bOTP_{\mathsf{bin}}$ and
$\bOTP_{\mathsf{un}}$.

It is well-known that \unbOTP is undecidable (\cite{Boas97}). It is also not hard to imagine
that $\bOTP_{\mathsf{un}}$ is NP-complete while $\bOTP_{\mathsf{un}}$ is \NEXPTIME-complete. In 
fact, this is well-known for the variants in which the underlying plane is not a triangle of height
$n$ but a square of height $n$ (\cite{Boas97}). The exponential difference incurred by the more compact
binary representation of the input parameter $n$ is best seen when regarding the upper complexity
bound for these problems: given $n$, a nondeterministic algorithm can simply guess all the $n^2$ many
tiles of the underlying square and verify the horizontal and vertical matchings in time $\mathcal{O}(n^2)$. 
If $n$ is encoded unarily, i.e.\ the space needed to write it down is $s := n$, then the time needed for 
this is polynomial in the input size $s$; if $n$ is encoded binarily with space $s := \lceil \log n\rceil$ 
then the time needed for this is exponential in $s$. 

It then remains to argue that the tiling problems based on triangular planes are also NP- resp.\ 
\NEXPTIME-complete. Clearly, the upper bounds can be established with the same guess-and-check
procedure. For the lower bounds it suffices to observe that hardness of the tiling problems for the
squares is established by a reduction from the halting problem for Turing machines (TM) such that a square
of size $n \times n$ represents a run of the TM of length $n$ as a sequence of rows, and each row 
represents a configuration of the TM using at most $n$ tape cells. This makes use of the observation
that the space consumption of a TM can never exceed the time consumption. Likewise, assuming that
a TM always starts a computation with its head on the very left end of a tape, one can easily observe
that after $i$ time steps, it can change at most the $i$ leftmost tape cells. Hence, a run of a TM
can therefore also be represented as a triangle with its first configuration of length 1 in row 1,
the second of length 2 in row 2 etc.

At last, we consider two slight modifications of these two problems which are easily seen 
to preserve undecidability resp.\ NP- and \NEXPTIME-completeness. The \emph{unbounded octant tiling-word problem}
(\unbOTWP) is: given some $\mathcal{S} = (S,H,V,t_I,t_F)$, decide whether there is a word 
$t_{0,0},t_{1,0},t_{1,1},t_{2,0},t_{2,1},t_{2,2},\ldots,$ $t_{n,n} \in S^*$ for some $n \in \nats$, s.t.\
the tiling $\tau$ defined by $\tau(i,j) := t_{i,j}$ comprises a successful tiling of $\mathcal{O}_n$. 
The two variants of the \emph{bounded octant tiling-word problem} are both: given some 
$\mathcal{S}$ as above and $n$, decide whether such a word exists. Note that, again, here $n$ is
an input parameter, and so its representation may affect the complexity of the problem, leading 
to the distinction between $\bOTWP_{\mathsf{bin}}$ with binary encoding and $\bOTWP_{\mathsf{un}}$
with unary encoding. 

\begin{theorem}
\label{thm:tiling_problems}
\ \begin{itemize}
	\item[a)] \unbOTWP is undecidable ($\Sigma_0^1$-complete).
	\item[b)] $\bOTWP_{\mathsf{bin}}$ is \NEXPTIME-complete.
	\item[c)] $\bOTWP_{\mathsf{un}}$ is NP-complete.
	\end{itemize}
\end{theorem}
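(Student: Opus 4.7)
The plan is to observe that all three statements follow almost immediately from the corresponding complexity results for the plain tiling problems (\unbOTP, \bOTP$_\mathsf{bin}$, \bOTP$_\mathsf{un}$) already discussed in the preceding paragraphs of the appendix. The bridge is the obvious bijection between $\mathcal{S}$-tilings $\tau : \mathcal{O}_n \to S$ and words of the form $t_{0,0},t_{1,0},t_{1,1},t_{2,0},\ldots,t_{n,n} \in S^+$ of length $\ell(n) := (n+1)(n+2)/2$, given by $t_{i,j} := \tau(i,j)$. Under this bijection, a tiling is successful exactly when the associated word satisfies the horizontal/vertical matching constraints plus the initial-tile condition $t_{0,0} = t_I$ and the final-tile condition $t_{i,i} = t_F$ for some $i$. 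Thus solutions of the tiling variant and the tiling-word variant correspond one-to-one.

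For the hardness parts, I would use the identity map on inputs as a reduction. The input to each tiling-word problem is the same as for the corresponding tiling problem (namely $\mathcal{S}$, plus $n$ in the bounded cases), so no transformation is needed; the bijection above guarantees that the yes-instances coincide. This immediately yields $\Sigma^0_1$-hardness of \unbOTWP from \unbOTP, \NEXPTIME-hardness of $\bOTWP_\mathsf{bin}$ from $\bOTP_\mathsf{bin}$, and NP-hardness of $\bOTWP_\mathsf{un}$ from $\bOTP_\mathsf{un}$.

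For the upper bounds in (b) and (c), I would give a direct guess-and-check algorithm: given $\mathcal{S}$ and $n$, nondeterministically guess a word $w = t_{0,0},t_{1,0},\ldots,t_{n,n} \in S^+$ of length $\ell(n)$, then deterministically verify in time $O(|S| \cdot \ell(n))$ that $t_{0,0} = t_I$, that some $t_{i,i} = t_F$, that $(t_{i,j},t_{i,j+1}) \in H$ for each $j < i \le n$, and that $(t_{i,j},t_{i+1,j}) \in V$ for each $j \le i < n$. When $n$ is given in unary, $\ell(n)$ is polynomial in the input size, so this is an NP procedure; when $n$ is given in binary, $\ell(n)$ is exponential in the input size, giving \NEXPTIME. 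For (a), decidability fails for the same reason as for \unbOTP since the reduction is identity in both directions; semi-decidability is clear, as one can enumerate words over $S$ and test each for the successful-tiling conditions, which together with the above hardness gives $\Sigma^0_1$-completeness.

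There is no real obstacle here; the only point worth care is verifying that the identity reduction respects the input-size measure appropriate to each complexity class (in particular, that it remains polynomial-time under the binary encoding of $n$ used for $\bOTP_\mathsf{bin} \to \bOTWP_\mathsf{bin}$), which is immediate since nothing about the input is altered.
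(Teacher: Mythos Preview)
Your proposal is correct and essentially matches the paper's own proof: both hinge on the observation that the tiling and tiling-word variants are interreducible via the obvious bijection (so the identity reduction works), and both use the same guess-and-check argument for the upper bounds. The only cosmetic difference is that the paper, in proving (b) and (c), expands the lower-bound argument by explicitly spelling out the TM-halting-to-octant-tiling reduction (the three-cells-into-one-tile trick), whereas you take the complexity of $\bOTP_{\mathsf{bin}}$ and $\bOTP_{\mathsf{un}}$ as already established by the preceding discussion in the appendix---which is a perfectly legitimate reading of that text.
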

\begin{proof}
	(a) It should be clear that a tiling problem and its tiling-word variant (like \unbOTP and \unbOTWP) are 
	interreducible since they only differ in the formulation of how the witness for a successful tiling
	should be presented. So they are essentially the same problems. Undecidability of \unbOTP and, thus,
	\unbOTWP is known from (\cite{Boas97}), the $\Sigma^1_0$-upper bound can be obtained through a
	semi-decision procedure that searches through the infinite space of $\mathcal{O}_n$-tiling for any
	$n > 1$. This justifies the statement in part (a) of Thm.~\ref{thm:tiling_problems}.
	
	(b) With the same argument as in (a) t suffices to consider $\bOTP_{\mathsf{bin}}$ instead of 
	$\bOTWP_{\mathsf{bin}}$. The upper bound is easy to see: a nondeterministic procedure can easily guess a tiling
	for $\mathcal{O}_n$ and verify the horizontal and vertical matching conditions, as well as the
	use of the initial and final tile in appropriate places. This is possible in time $\mathcal{O}(n^2)$,
	resp.\ $\mathcal{O}(2^{2\log n})$ which is therefore exponential in the input size $\lceil \log n \rceil$
	for binarily encoded parameters $n$. This shows inclusion in \NEXPTIME. 
	
	For the lower bound we argue that the halting problem for nondeterministic, exponentially-time
	bounded TM can be reduced to $\bOTP_{\mathsf{bin}}$: given a nondeterministic TM $\mathcal{M}$ over 
	input alphabet $\Sigma$ and tape alphabet $\Gamma$ that halts after at most time $2^{p(n)}$ steps 
	on input words of length $n$ for some polynomial $n$, and a word $w \in \Sigma^*$, we first construct
	a TM $\mathcal{M}_w$ that is started in on the empty tape and begins by writing $w$ onto the tape
	and then simulates $\mathcal{M}$ on it. This is a standard construction in complexity theory, and
	it is easy to see that the running time of $\mathcal{M}_w$ is bounded by a function $2^{p'(|w|)}$
	for some polynomial $p'$. With the observation made above, a computation of $\mathcal{M}_w$ can 
	be seen as a sequence of configurations $C_1,\ldots,C_{p'(|w|)}$, with $|C_i| = i$. This does not
	directly define a tiling system, instead and again by a standard trick, cf.\ (\cite{Boas97}) or
	\cite[Chp.~11]{TLCS-DGL16}, one compresses three adjacent tape cells into one tile in order to
    naturally derive a horizontal matching relation from overlaps between such triples and a vertical
    matching relation from the TM's transition function. At last, let $n' := p'(|w|)$. It is then a 
    simple exercise to verify that a valid tiling of the triangle $\Delta_{n'}$ corresponds to an 
    accepting run of $\mathcal{M}$ on $w$ and vice-versa, which establishes \NEXPTIME-hardness. 
	
    (c) This is down exactly along the same lines as part (b), but instead making use of the fact that,
    when $n$ is given in unary encoding, $p(n)$ is polynomial in the size of the representation of 
    $n$, and hence, the time needed for the guess-and-check procedure in the upper bound is only
    polynomial, and for the lower bound we need to assume that the running time of the TM is polynomially
    bounded. Thus, we get NP-completeness instead of \NEXPTIME-completeness.   
\end{proof}

\section{Proofs of Section~\ref{sec:undec}}
\label{app:undec_proofs}

In the following, we give formal proof for the undecidability results of Section~\ref{sec:undec}. To do so, we make use of classical Feed-Forward Neural Networks.

\paragraph{Feed-Forward Neural Network}
A \emph{neuron} $v$ is a computational unit computing a function $\reals^m \rightarrow \reals$ by 
$v(x_1, \dotsc, x_m) = \sigma(b + \sum_{i=1}^m w_i x_i)$ where $\sigma$ is a function called \emph{activation}
and $b,w_i$ are parameters called \emph{bias} resp.\ \emph{weight}. A \emph{layer} $l$ is a tuple of nodes $(v_1, \dotsc, v_n)$ 
where we assume that all nodes have the same input dimensionality $m$. Therefore, $l$ computes a function 
$\reals^m \rightarrow \reals^n$. We call $n$ the \emph{size of layer $l$}.
Let $l_1$ be a layer with input dimensionality $m$ and $l_k$ a layer of size $n$. 
A \emph{Feed-Forward Neural Network (FNN)} $N$ is a tuple $(l_1, \dotsc, l_k)$ of layers where we assume that for all $i \leq k-1$
holds that the size of $l_i$ equals the input dimensionality of $l_{i+1}$. Therefore, $N$ computes a function $\reals^m \rightarrow \reals^n$
by processing an input layer by layer.

In particular, we use specific FNN with $\relu(x) = \max(0,x)$ activations, called \emph{gadgets}, 
to derive lower bounds in connection with the expressibility of transformers. We denote the class of all 
FNN with $\relu$ activations by $\pwlfnn$.
\begin{lemma}
	\label{app:undec_proofs;lem:gadgets}
	Let $k \in \reals^{>0}$. There are basic gadgets
	\begin{enumerate}
			\item $N_{|\cdot|} \in \pwlfnn$ computing $N_{|\cdot|}(x) = |x|$, 
			\item $N_< \in \pwlfnn$ computing a function $\reals^2 \rightarrow \reals$ such that
			$N_<(x_1,x_2) = 0$ if $(x_1 + 1) - x_2 \leq 0$, $N_<(x_1,x_2) = (x_1 + 1) - x_2 $ if $(x_1 + 1) - x_2  \in (0;1)$
			and $N_<(x_1,x_2) = 1$ otherwise,
			\item $N_= \in \pwlfnn$ computing a function $\reals^2 \rightarrow \reals$ such that
			$N_= (x_1, x_2) = 0$ if $x_1 - x_2 = 0$, $N_= (x_1, x_2) = |x_2-x_1|$ if $|x_2 - x_1| \in (0;1)$
			and $N_=(x_1,x_2) = 1$ otherwise,
			\item $N_\rightarrow \in \pwlfnn$ computing a function $\reals^2 \rightarrow \reals$
			such for all inputs $x_1, x_2$ with $x_1 \in \{0,1\}$ and $x_2 \in [0;k]$ holds 
			$N_\rightarrow(x_1, x_2) = 0$ if $x_1 = x_2=0$ or $x_1=1$ 
			and $N_\rightarrow(x_1, x_2) = \relu(x_2)$ otherwise.
		\end{enumerate}
\end{lemma}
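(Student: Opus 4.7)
The plan is to construct each gadget directly as a shallow ReLU feed-forward network, relying on standard clamping identities and the hypothesised bounds on the inputs. Since $\pwlfnn$ is closed under composition, it suffices to exhibit for each gadget a small network whose output, on every input in the stated range, agrees with the specification. No fancy architecture is required; all four constructions fit in at most two hidden layers.

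For (1), I would use the textbook identity $|x|=\relu(x)+\relu(-x)$: a hidden layer with two neurons (weights $+1$ and $-1$, bias $0$) feeding into a linear sum. For (2), I would note that the specified function is exactly the clamped ramp $\min\bigl(1,\max(0,(x_1+1)-x_2)\bigr)$, and implement it as $N_<(x_1,x_2)=\relu\bigl((x_1+1)-x_2\bigr)-\relu\bigl((x_1+1)-x_2-1\bigr)$, using the standard fact that $\min(1,\relu(y))=\relu(y)-\relu(y-1)$. For (3), I would first apply the $N_{|\cdot|}$ gadget from (1) to obtain $z=|x_1-x_2|\ge 0$, and then compose it with the same clamping trick: $N_=(x_1,x_2)=z-\relu(z-1)$; case analysis on whether $z=0$, $z\in(0,1)$ or $z\ge 1$ recovers the three cases in the statement.

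For (4), the observation is that when $x_1\in\{0,1\}$ and $x_2\in[0,k]$, the expression $x_2-k\,x_1$ equals $x_2\ge 0$ when $x_1=0$ and lies in $[-k,0]$ when $x_1=1$. Hence a single ReLU neuron with output $N_{\rightarrow}(x_1,x_2)=\relu(x_2-k\,x_1)$ yields $0$ whenever $x_1=1$, and $\relu(x_2)=x_2$ whenever $x_1=0$, which matches the specification (noting that at $x_1=x_2=0$ both branches give $0$). This is the only place where the hypothesis $x_2\le k$ is used, so I would state this assumption explicitly when verifying correctness.

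I do not expect any real obstacle, since each case reduces to a one- or two-line verification by cases. The only point requiring care is gadget (4), where the correctness hinges on the explicit upper bound $k$ on $x_2$ being built into the weight of the neuron, so I would make sure to highlight that the choice of weight $-k$ on $x_1$ matches precisely the promised input range. A full formal proof then just lists the four networks and argues correctness by straightforward case distinction on the sign of the relevant arguments to each $\relu$.
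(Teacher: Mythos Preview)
Your proposal is correct and essentially matches the paper's proof: the paper uses the same ReLU identities $|x|=\relu(x)+\relu(-x)$, the clamp $\relu(y)-\relu(y-1)$ for $N_<$ and $N_=$, and $\relu(\relu(x_2)-k\cdot\relu(x_1))$ for $N_\rightarrow$, differing from your constructions only by harmless outer $\relu$'s (the outputs are already non-negative) and by writing out the formula for $N_=$ directly rather than explicitly composing with $N_{|\cdot|}$.
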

\begin{proof}
	Let $N_{|\cdot|}$ be the minimal FNN computing $\relu(\relu(-x) + \relu(x))$, let
	$N_<$ be the minimal FNN computing $\relu(f_<(x_1, x_2) - f_<(x_1, x_2+1))$ where $f_<(y_1, y_2) = \relu(y_1 - y_2 + 1)$
	and let $N_=$ be the minimal FNN computing $\relu(f_=(x_1, x_2) - f_= (x_1 + 1, x_2) + f_=(x_2,x_1) -  f_= (x_2 + 1, x_1))$ 
	where $f_=(y_1, y_2) = \relu(y_2 - y_1)$. The claims of the lemma regarding these gadgets are straightforward given their functional form.
	Let $N_\rightarrow$ be the minimal FNN computing $\relu(\relu(x_2) - k\cdot\relu(x_1))$. As stated in the lemma, we assume that $x_1 \in \{0,1\}$
	and $x_2 \in [0;k]$. Then, $-k\cdot \relu(x_1)$ is $-k$ if $x_1=1$ and $0$ if $x_1=0$. Thus, $N_\rightarrow$ is guaranteed to be $0$ if 
	$x_1 = 1$ and otherwise it depends on $x_2$. This gives the claim regarding gadget $N_\rightarrow$.
\end{proof}

We will combine gadgets in different ways. Let $N_1$ and $N_2$ be FNN with the same input dimensionality $m$ and output
dimensionality $n_1$ respectively $n_2$. 
We extend the computation of $N_1$ to functions $\reals^{m'} \rightarrow \reals^{n_1}$ 
with $m < m'$ by weighting additional dimensions with $0$ in the input layer. Given a set of input dimensions
$x_1, \dotsc, x_{m'}$, we denote the effective dimensions $x_{i_1}, \dotsc, x_{i_{m}}$ with pairwise
different $i_j \in \{1, \dotsc, m'\}$ by $N^{x_{i_1}, \dotsc, x_{i_{m}}}_1$. Formally, this means that
$N^{x_{i_1}, \dotsc, x_{i_{m}}}_1(x_1, \dotsc, x_{m'}) = N_1(x_{i_1}, \dotsc, x_{i_m})$ for all inputs.
We denote the FNN consisting of $N_1$ and $N_2$ placed next to each other 
by $N_1 \para N_2$. Formally, this is done by combining $N_1$ and $N_2$ layer by layer using $0$ weights in intersecting connections. 
Then, $N_1 \para N_2$ computes $\reals^m \rightarrow \reals^{n_1 + n_2}$ given by $N_1 \para N_2(\bs{x}) = (N_1(\bs{x}), N_2(\bs{x}))$.
We generalize this operation to $k$ FNN $N_1 \para \dotsb \para N_k$ in the obvious sense. Let $N_3$ be an FNN with input
dimensionality $n_1$ and output dimensionality $n_3$. We denote the FNN consisting of $N_1$ and $N_3$ placed sequentially
by $N_3 \circ N_1$. Formally, this is done by connecting the output layer of $N_1$ with the input layer of $N_3$. Then, $N_3 \circ N_1$
computes $\reals^m \rightarrow \reals^{n_3}$ given by $N_3 \circ N_1 (\bs{x}) = N_3(N_1(\bs{x}))$. 

We also consider specific gadgets needed in the context of tiling problems.
\begin{lemma}
	\label{app:undec_proofs;lem:tilegadgets}
	Let $S \subseteq \nats$ be a finite set and $R \subseteq S^2$. There is 
	FNN $N_R \in \pwlfnn$ computing $\reals^2 \rightarrow \reals$ such that $N_R(x_1,x_2) \in \{0,1\}$
	if $(x_1,x_2) \in S^2$ and $N_R(x_1, x_2) = 0$ iff $(x_1, x_2) \in R$ and there is 
	$N_{=t} \in \pwlfnn$ for each $t \in S$ computing $\reals \rightarrow \reals$ such that
	$N_{=t}(x) \in \{0,1\}$ for each $x \in \nats$ and $N_{=t}(x) = 0$ iff $x = t$.
\end{lemma}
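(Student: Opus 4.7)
The plan is to build both gadgets directly from the basic equality gadget $N_=$ supplied by Lemma~\ref{app:undec_proofs;lem:gadgets}, using only parallel composition $\para$, sequential composition $\circ$, and the ReLU-expressible minimum $\min(a,b) = a - \relu(a-b)$. The key observation throughout is that whenever inputs are drawn from $\nats$, the ``interpolating'' middle case of $N_=$ is never triggered (since then $|x-y| \in \{0\} \cup [1,\infty)$), so every intermediate quantity stays in a small finite integer range and the required $\{0,1\}$-structure is preserved automatically.

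For $N_{=t}$ I would simply hardwire the constant $t$ into the second argument of $N_=$: set $N_{=t}(x) := N_=(x, t)$, realised by absorbing $-t$ into the bias of the first layer of $N_=$. For $x \in \nats$ this gives $N_{=t}(x) \in \{0,1\}$ with $N_{=t}(x) = 0$ iff $x = t$, as required. For $N_R$ I would express ``$(x_1,x_2) \in R$'' as a disjunction over the finite enumeration $R = \{(s^{(1)}_1, s^{(1)}_2), \ldots, (s^{(k)}_1, s^{(k)}_2)\}$. Define
\[
    g_i(x_1, x_2) := N_{=s^{(i)}_1}(x_1) + N_{=s^{(i)}_2}(x_2),
\]
realised by placing the two scalar gadgets in parallel and summing in a subsequent linear layer. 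For $(x_1,x_2) \in S^2$ we have $g_i(x_1,x_2) \in \{0,1,2\}$ and $g_i(x_1,x_2) = 0$ iff $(x_1,x_2) = (s^{(i)}_1, s^{(i)}_2)$. Iterating the ReLU-min then yields $m := \min_{1 \le i \le k} g_i \in \pwlfnn$, taking values in $\{0,1,2\}$ on $S^2$ and vanishing iff $(x_1,x_2) \in R$. A final capping layer $N_R := m - \relu(m-1)$ compresses $\{0,1,2\}$ onto $\{0,1\}$ without moving the zero-set.

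The main obstacle is essentially bookkeeping: checking that the composite circuit truly lives in $\pwlfnn$ and that the $\{0,1\}$-structure survives every combinator. No genuinely new analytical work is required beyond the elementary computations already present in the proof of Lemma~\ref{app:undec_proofs;lem:gadgets}; once one notes that integer inputs prevent $N_=$ from ever entering its interpolating regime, all intermediate values live in a finite integer set and every min or cap reduces to a tiny ReLU circuit.
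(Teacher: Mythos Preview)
Your construction is correct and takes a genuinely different route from the paper's. For $N_{=t}$ both you and the paper do essentially the same thing: feed the constant $t$ into the second slot of $N_=$. For $N_R$, however, the paper decomposes by the \emph{first coordinate}: for each $s \in S$ it uses the implication gadget $N_\rightarrow$ to encode ``if $x_1 = s$ then $x_2 \in R[s]$'', where $R[s] = \{r : (s,r) \in R\}$ and the set-membership test $N_{\in T}$ is imported from an external construction; the $|S|$ resulting branches are then summed by a single ReLU node. You instead decompose by the \emph{elements of $R$}: for each pair in $R$ you build the equality-sum $g_i$, fold with an iterated ReLU-$\min$, and cap to $\{0,1\}$. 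Your approach is more self-contained (no external $N_{\in T}$ gadget needed, only $N_=$), at the price of a circuit whose width and depth scale with $|R| \le |S|^2$ rather than $|S|$; both bounds are polynomial in $|S|$, which is all that the later polynomial-time reduction in Theorem~\ref{sec:dec;thm:boundedword} requires. One small omission: when $R = \emptyset$ your iterated minimum is vacuous, so you should stipulate separately that $N_R$ is then the constant-$1$ network (the paper handles this boundary case explicitly).
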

\begin{proof}
	Let $S \subseteq \nats$ be finite, $R \subseteq S^2$ and $t \in S$. First, consider
	$N_{=t}$. Let $N_t$ be the minimal FNN computing $\relu(0\cdot x + t)$ and $N_{\mathit{id}}$
	be the minimal FNN computing $(\relu(x), -\relu(-x))$. 
	Obviously, $N_t$ computes the constant $t$ function and
	$N_{\mathit{id}}$ computes the identity in the form of two dimensional vectors. Let $N_{=t}$ be given by the minimal FNN computing
	$N_= \circ (N_{\mathit{id}} \para N_t)$ with the slight alteration that the two output dimensions of $N_{\mathit{id}}$ 
	are connected to the first dimension of $N_=$. Then, the claim of the lemma regarding $N_{=t}$ follows from 
	Lemma~\ref{app:undec_proofs;lem:gadgets} and the operations on FNN described in Appendix~\ref{app:undec_proofs}.
	
	Now, consider $N_R$. Given some $s \in S$ let $R[s] = \{r \mid (s,r) \in R\}$. 
	Let $N^k_\land$ be the minimal FNN computing $\relu(x_1 + \dotsb + x_k)$. 
	Furthermore, let $N_{\in T}$ for some set $T \subseteq S$ be the minimal FNN such that 
	$N_{\in T} (x) = 0$ if $x \in T$ and $N_{\in T} (x) = 1$ if $x \in S\setminus T$. A construction
	for $N_{\in T}$ is given in Theorem 4 in (\cite{SalzerL23}). According to this construction,
	$N_{\in T}$ consists of three layers and is polynomial in $T$. In the case that $T = \emptyset$ we assume that 
	$N_{\in \emptyset}$ is the constant $1$ function represented by a suitable FNN.
	Then, $N_R$ is given by $N^{|S|}_\land \circ ((N_\rightarrow \circ (N_{=s_1} \para N_{\in R[s_1]})) \para \dotsb \para (N_\rightarrow \circ (N_{=s_{|S|}} \para N_{\in R[s_{|S|}]})))$
	for some arbitrary order on $S$ with the slight alteration that $N_R$ has two input dimensions, meaning that each subnet $(N_{=s_i} \para N_{\in R[s_i]})$
	is connected to the same two input dimensions. Again, the claim of the lemma regarding $N_R$ follows from 
	Lemma~\ref{app:undec_proofs;lem:gadgets} and the operations on FNN described in Appendix~\ref{app:undec_proofs}.
\end{proof}

Given these understandings of gadgets, we are set to formally prove the results of Section~\ref{sec:undec}.
\begin{proof}[Proof of Lemma~\ref{sec:undec;lem:decode}]
	Let $w = t_{0,0} t_{1,0} t_{1,1} t_{2,0} \dotsb t_{m,n} \in S^+$ as stated in the lemma and assume some order $a_i$ on $S$.  
	Furthermore, let $\emb(a_i, 1) = (1, 1, 1, 1, i)$ and $\emb(a_i, j) = (0, 1, j, \sum_{h=0}^{j} h, i)$ if $j > 1$. Let $\emb(w) = \bs{x}^0_1 \dotsb \bs{x}^0_k$. 
	In the following, we build two layers $l_1$ and $l_2$ using components allowed in $\transCundec$, satisfying 
	the statement of the lemma.
	Layer $l_1$ consists of a single attention head $\att_{1,1} = (\score_{1,1}, \pool_{1,1})$. The scoring function  is given
	by $\score_{1,1}(\bs{x}^0_i,\bs{x}^0_j) =  N_{1,1}(\langle Q_{1,1}\bs{x}^0_i, K_{1,1}\bs{x}^0_j\rangle)$ where $Q_{1,1} = [(0,0, -1, 0, 0), (0,1, 0, 0,  0), (0,1, 0, 0,  0)]$
	and $K_{1,1} = [(0,1, 0, 0, 0), (0,1, 0, 0, 0), (0,0, 0, 1,  0)]$ and $N (x) = -\relu(x)$. 
	We have that $\score_{1,1}(\bs{x}^0_i,\bs{x}^0_j) = -\relu((\sum_{h=0}^{j} h) - (i-1))$ and it follows that $\score_{1,1}(\bs{x}^0_i,\bs{x}^0_j) = 0$ if $\sum_{h=0}^{j} h \leq i-1$ and otherwise we have that $\score_{1,1}(\bs{x}^0_i,\bs{x}^0_j) < 0$. The pooling function is specified by the matrix $W_{1,1} = [(1,0,0,0,0)]$ and uses $\hmax$ as normalisation function. 
	The combination $\comb_1$ function is given by the FNN 
	$N_1(x_1, \dotsc, x_5, y) = \relu(x_2) \para \dotsb \para \relu(x_5) \para \relu(y)$.
	Given a position $\bs{x}^0_i$, the attention head $\att_{1,1}$ attends to 
	all positions $\bs{x}^0_j$ satisfying 
	$\sum_{h=0}^{j} h \leq i-1$. This is due to the way $\score_{1,1}$ is build. Then, $\att_{1,1}$ computes $\frac{1}{l}$ using 
	$\pool_{1,1}$ where $l$ is the number of positions $\att_{1,1}$ attends to. Here, we exploit the fact that only the first position
	$\bs{x}^0_1$ has a non-zero entry in the its first dimension and that for all $i$ head $\att_{1,1}$ attends to $\bs{x}^0_1$.
	Finally, $\comb_1$ simply stacks the old vector $\bs{x}^0_i$ onto the value $\frac{1}{l}$, but leaves out the first dimension of $\bs{x}^0_i$.
	Let $l_1(\emb(w)) = \bs{x}^1_1 \dotsb \bs{x}^1_k$.
	Layer $l_2$ consists of a single attention head $\att_{2,1} = (\score_{2,1}, \pool_{2,1})$. The scoring function 
	$\score_{2,1}$ is given by $N_{2,1}(\langle Q_{2,1}\bs{x}^1_i, K_{2,1}\bs{x}^1_j\rangle)$ where $Q_{2,1} = [(0,0,0,0,1)]$, 
	$K_{2,1} = [(0,1,0,0,0)]$ and $N_{2,1}(x) = -\relu(\relu(x-1) + \relu(1-x))$. We have that $\score_{2,1}(\bs{x}^1_i, \bs{x}^1_j) = 0$ if
	$\frac{1}{l} \cdot j = 1$ where $\frac{1}{l}$ is the fifth dimension of $\bs{x}^1_i$ and otherwise $\score_{2,1}(\bs{x}^1_i, \bs{x}^1_j) < 0$.
	The pooling function $\pool_{2,1}$ is specified by 
	$W_{2,1} = [(0,1,0,0,0), (0,0,1,0,0)]$ and 
	uses $\hmax$ as normalisation. The combination $\comb_2$  is given by
	the FNN $N_2(x_1, \dotsc, x_5, y_1,y_2) = \relu(x_1) \para \relu(x_2) \para \relu(y_1) \para \relu(x_2 - y_2 - 1) \para \relu(x_4)$. 
	Given a position $\bs{x}^1_i$, the attention head $\att_{2,1}$ attends to the position $j$, where $\frac{1}{l} \cdot j = 1$. Relying on
	our arguments regarding the computation of $l_1$, this is the position $j$ satisfying $\max_j(\sum_{h=0}^{j} h \leq i-1)$. However, this
	$j$ is equal to the row index $r(i)$ of the decomposition of $i$ based on the inversion of Cantor's pairing function. Thus, we have that $r(i) = j$.
	Furthermore, we have that $c(i) = (i-1) - (\sum_{h=0}^{j} h)$, which is computed by $\relu(x_2 - y_2 - 1)$ in the combination function $\comb_2$.
	Overall, we see that $l_2(l_1(\emb(w)))$ gives the desired result.
\end{proof}

\begin{proof}[Proof of Lemma~\ref{sec:undec;lem:lin}]
	Let $f$ be as stated in the lemma. By definition of $\transCundec$, the scoring function of $\att_f$ is of the form $N(\langle Q\bs{x}_i, K\bs{x}_j\rangle)$ and the 
	normalisation is $\hmax$. Let $Q = [(a_1, \dotsc,  a_k), (b, 0, \dotsc, 0), (1, 0, \dotsc, 0)]$, $K = [(1, 0, \dotsc, 0), (1, 0, \dotsc, 0), (0,-1,0, \dotsc, 0)]$ and 
	$N$ be the minimal FNN computing $N(x) = -\relu(N_{|\cdot|}(x)) = -|x|$ where $N_{|\cdot|}$ is given by Lemma~\ref{app:undec_proofs;lem:gadgets}. 
	Overall, this ensures that the scoring is given by $\score(\bs{x}_i, \bs{x}_j) =-|f(\bs{x}_i) - j|$. Then, the statement of the lemma follows from the
	fact that $\hmax$ attends to the maximum, which is $0$ given this scoring, and that $j \in \nats$ is unique for each $\bs{x}_j$. 
\end{proof}

\begin{lemma}
	\label{app:undec_proofs;lem:leq}
	There is attention head $\att_\leq$ in $\transCundec$ 
	such that for all sequences $\bs{x}_1, \dotsc, \bs{x}_m$ where all $\bs{x}_i = (1, i, \bs{y}_i)$ 
	the head $\att_\leq$ attends to $\{\bs{x}_1, \dotsc, \bs{x}_{i}\}$ given $i$. 
\end{lemma}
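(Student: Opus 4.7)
The plan is to build $\att_\leq$ so that its scoring function assigns the same (maximal) value to every position $j \leq i$ and a strictly smaller value to every position $j > i$; once this is achieved, the $\hmax$ normalisation automatically distributes weight uniformly over $\{\bs{x}_1,\dots,\bs{x}_i\}$, which is precisely the behaviour claimed.

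Concretely, I would pick the linear maps $Q$ and $K$ so that $\langle Q\bs{x}_i, K\bs{x}_j\rangle = i - j$. This is possible because the position index is stored explicitly in the second coordinate of every $\bs{x}_i$: for instance, taking $Q\bs{x}_i = (i,1)$ by projecting the position into the first row of $Q$ and placing a constant $1$ in the second row (the first coordinate of $\bs{x}_i$ always equals $1$, so this can be realised by a constant weight on that coordinate), and $K\bs{x}_j = (1,-j)$ analogously, yields the inner product $i - j$. The remaining coordinates $\bs{y}_i,\bs{y}_j$ are zeroed out by $Q$ and $K$.

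Next I would compose this with the nonlinear correction $N(x) = -\relu(-x)$, which is a one-layer $\pwlfnn$ and therefore admissible in $\transCundec$. Then $\score(\bs{x}_i,\bs{x}_j) = N(i-j)$ equals $0$ whenever $j \leq i$ and equals $i-j < 0$ whenever $j > i$. Applying $\hmax$ to this scoring row yields weight $\tfrac{1}{i}$ on every $j \in \{1,\dots,i\}$ and weight $0$ elsewhere, so $\att_\leq$ attends exactly to $\{\bs{x}_1,\dots,\bs{x}_i\}$, as required.

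There is no real obstacle here; the only small point worth checking is that all ingredients (scalar-product scoring composed with a $\relu$-FNN, matrix-based pooling, $\hmax$ normalisation) are explicitly permitted by the definition of $\transCundec$ given at the start of Section~\ref{sec:undec}, which they are. Note moreover that this construction is essentially the degenerate case $f(\bs{x}_i) = i$ of Lemma~\ref{sec:undec;lem:lin}, except that we deliberately replace the absolute-value nonlinearity by the one-sided $-\relu(-\cdot)$ so that positions $j < i$ are not penalised relative to $j = i$; this asymmetry is what produces a whole set $\{1,\dots,i\}$ of maximising indices rather than a single one.
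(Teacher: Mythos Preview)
Your proposal is correct and follows essentially the same route as the paper: both pick $Q,K$ so that the scalar product equals $i-j$ and then apply a one-sided $\relu$-based nonlinearity so that all $j\le i$ receive score $0$ while $j>i$ receive a strictly negative score, after which $\hmax$ does the rest. Your choice $N(x)=-\relu(-x)$ is in fact the sign-consistent one for the inner product $i-j$; the paper writes $N(x)=-\relu(x)$, which only matches the stated conclusion if the inner product is taken as $j-i$, so your version is arguably the cleaner of the two.
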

\begin{proof}
	By definition of $\transCundec$, the scoring function of $\att_f$ is of the form $N(\langle Q\bs{x}_i, K\bs{x}_j\rangle)$ and the 
	normalisation is $\hmax$. Let $Q = [(0, 1, 0, \dotsc,  0), (1, 0, \dotsc,  0)]$ and let $K$ be equal to $[(1, 0 \dotsc,  0), (0, -1, 0, \dotsc,  0)]$. 
	Furthermore, let 
	$N(x) = -\relu(x)$. We observe that $N$ outputs $0$ if $j \leq i$ and otherwise $N(x) < 0$.
	In combination with $\hmax$, this ensures that $\att_\leq$ behaves as stated by the lemma. 
\end{proof}
\begin{proof}[Proof of Theorem~\ref{sec:undec;thm:undec}]
	{
		\newcommand*{\T}{T_\mathcal{S}}
		\newcommand*{\err}{\mathit{err}}
		
		We prove the statement via reduction from $\unbOTWP$. Let $\mathcal{S} = (S,H,V,t_I,t_F)$
		be an instance of $\unbOTWP$ with $|S| = k$. W.l.o.g we assume that $S \subseteq \nats$. 
		Let $\T \in \transCundec$ be built the following way.
		$\T$ uses the embedding $\emb$ of transformer in $\transCundec$ specified in the beginning of Section~\ref{sec:undec}. 
		Furthermore, it has four layers. Layers $l_1$, $l_2$ are as in Lemma~\ref{sec:undec;lem:decode}. Layer 
		$l_3$ is given by $l_3=(\att_{\mathit{prev}}, \att_{\mathit{next}}, \att_{\mathit{step}}, \comb_3)$ 
		where  $\att_{\text{prev}}$, $\att_{\text{next}}$ and $\att_{\text{step}}$ are of 
		Lemma~\ref{sec:undec;lem:lin} whereby  
		$\mathit{prev}(x_1, \dotsc, x_5) = x_2 - 1$,
		$\mathit{next}(x_1, \dotsc, x_5) = x_2 + 1$
		and $\mathit{step}(x_1, \dotsc, x_5) = x_2 + x_3 + 1$. 
		We assume that all three attention heads use the identity matrix as linear maps in their respective pooling function. 
		$\comb_3$ is given by an FNN $N_3$ computing $\reals^{4 \cdot 5} \rightarrow \reals$. Let the input
		dimensions of $N_3$ be $x_{1,1}, \dotsc, x_{1,5}, x_{2,1}, \dotsc, x_{4,5}$. Then, $N_3$ is equal to 
		\begin{displaymath}
			\relu(x_{1,1}) \para \relu(x_{1,2}) \para N_a \para N_{b_1} \para N_{b_2} \para N_c \para N_d
		\end{displaymath}
		where $N_a = N_\rightarrow \circ (N^{x_{1,2}, x_{3,2}}_= \para N_=^{x_{1,3}, x_{1,4}})$,
		$N_{b_1} = N_\rightarrow \circ (N^{x_{1,2}, x_{2,2}}_= \para N_{=t_I}^{x_{1,5}})$,
		$N_{b_2} = N_\rightarrow \circ (N^{x_{1,2}, x_{3,2}}_= \para N_{=t_F}^{x_{1,5}})$, 
		$N_c = N_\rightarrow \circ (N^{x_{1,4}, x_{1,3}}_< \para N^{x_{1,5},x_{3,5}}_H)$
		and $N_d = N_\rightarrow \circ (N_<^{x_{1,3}, {x_{4,3}}}  \para N^{x_{1,5},x_{4,5}}_V)$
		using the gadgets and constructions described in Appendix~\ref{app:undec_proofs}. 
		Layer $l_4$ is given by  $l_4= (\att_{\text{leq}}, \comb_4)$
		where $\att_{\text{leq}}$ attends to $\{\bs{x}_1, \dotsc, \bs{x}_i\}$ given $i$
		and $\comb_4$ is given by the minimal FNN $N_4$ computing $\relu(x_3 + \dotsb + x_7)$.
		A formal proof for the existence of $\att_{\text{leq}}$ in $\transCundec$ is given in Lemma~\ref{app:undec_proofs;lem:leq}. 
		Furthermore, the output function $\out$ of $\T$ is given by the minimal FNN $N_\out$ computing $N(x_1) =
		\relu(1 - x_1)$.
		
		Let $w= t_1 \dotsb t_l \in S^*$ be some word over alphabet $S$. As defined above, we have that 
		$\emb(t_i,i) = (1,i, \sum_{j=0}^{i}j, k_i)$ where $k_i \in \{1, \dotsc, |S|\}$. 
		Consider $\bs{x}^2_1 \dotsb \bs{x}^2_m$, namely the sequence of vectors after propagating $w$
		through the embedding $\emb$ and layers $l_1,l_2$ of $\T$. As stated by Lemma~ \ref{sec:undec;lem:decode}, we have that 
		$\bs{x}^2_i = (1,i, r(i), c(i), k_i)$ where $r(i)$ and $c(i)$ are the row respectively column of tile $t_i$ 
		if we interpret $w$ as an encoded tiling. Note that all vectors $\bs{x}^3_i$ are non-negative due to the way $N_3$ is built. 
		In the following, we argue that all $\bs{x}^3_i = \bs{0}$ if and only if $w$ is a valid
		encoded tiling. Given this equivalence, the statement of the lemma follows immediately as $l_4$ simply sums up
		all vectors and dimensions (except for the first and second) of $\bs{x}^3_1, \dotsc, \bs{x}^3_m$  in $\bs{x}^4_m$ 
		and the output of $N_4$ indicates whether there was some non-zero value. 
		We fix some arbitrary $\bs{x}^2_i = (1,i, r(i), c(i), k_i)$. 
		Then, $\bs{x}^3_i= N_3(\bs{x}^2_i, \bs{x}^2_{i_\mathit{prev}}, \bs{x}^2_{i_\mathit{next}}, \bs{x}^2_{i_\mathit{step}})$ where 
		$i_\mathit{next} = i+1$ if $i < m$ and $m$ otherwise, $i_\mathit{prev} = i-1$ if $i > 1$ and $1$ otherwise and 
		$i_\mathit{step} = i + r(i) + 1$ if $ i < m - r(i) - 1$ and $m$ otherwise. 
		
		Consider property $(a)$ and subnetwork $N_a$. With the understanding gained in Appendix~\ref{app:undec_proofs},  
		$N^{x_{1,2}, x_{3,2}}_=$ outputs $0$ iff $x_{1,2} = x_{3,2}$. These dimensions correspond to positions 
		$i$ and $i_\mathit{next}$, which are only equal if $i = m$ (Lemma~\ref{sec:undec;lem:lin}). Furthermore,
		the property of $N_\rightarrow$ stated by Lemma~\ref{app:undec_proofs;lem:gadgets} is given as the output of $N_=$ is guaranteed to be in 
		$[0;1]$ and the values of $x_{1,2}$ and $x_{3,2}$ are guaranteed to be in $\nats$. In summary, this ensures that
		the third dimension of $\bs{x}^3_m$ is $0$ iff $r(m) = c(m)$. For other positions the third dimension is always $0$ since 
		$N_\rightarrow$ outputs $0$ in these cases due to the fact that $N^{x_{1,2}, x_{3,2}}_=$ equals $1$. 
		Analogously, $N_{b_1}$ and $N_{b_2}$ ensure that $t_1 = t_I$ and $t_m = t_F$ and, thus, property (b) iff the fourth
		and fifth dimensions in all positions are equal to $0$.
		Consider properties (c) and (d) described above and assume that property (a) holds. 
		These two properties are non-local in the sense that they depend on at least two positions in
		$\bs{x}^2_1 \dotsb \bs{x}^2_m$. Consider the subnet $N_c$. By construction and the gadgets described 
		in Appendix~\ref{app:undec_proofs}, we have that $N_c$ outputs $0$ if $c(i) < r(i)$ and $(t_i, t_{i+1}) \in H$
		or if $c(i) = r(i)$, which means that tile $t_i$ is rightmost in its corresponding row. Otherwise the value computed by $N_c$ is greater
		than $0$. Analogously, subnet $N_d$ checks whether vertically stacked tiles do match. In summary, this ensures that
		the sixth and seventh dimension of each $\bs{x}^3_i$ is equal to $0$ if and only if properties (c) and (d) hold.
	}
\end{proof}

\begin{proof}[Proof of Theorem~\ref{sec:undec;thm:logundec}]
	In the same manner as in the proof of Theorem~\ref{sec:undec;thm:undec}, 
	we prove the statement via reduction from $\unbOTWP$. The reduction is exactly the 
	same, namely given an $\unbOTWP$ instance $\mathcal{S} = (S, H, V, t_I, t_F)$ we build TE 
	$T_{\mathcal{S}}$ which recognizes exactly those words $w$ representing a valid encoded 
	tiling of $\mathcal{S}$. For details, see the proof of Theorem~\ref{sec:undec;thm:undec}.
	
	Given the correctness arguments for $T_{\mathcal{S}}$ in Theorem~\ref{sec:undec;thm:undec}, it 
	is left to argue that $T_{\mathcal{S}}$ works as intended, 
	despite the fact that it works over some FA $F$ using at most $\mathcal{O}(\log(\max(|S|,n)))$ bits
	where $n$ is the length of an input word. We choose $F$ such that overflow 
	situations do not occur in any computation $T_{\mathcal{S}}(w)$ and rounding 
	is handled such that $T_{\mathcal{S}}$ works as intended.
	Throughout this proof, we use $\log(n)$ 
	Namely, given a word $w$ with $|w| = n$ assume that $F$ uses $m = \rdown{4\log(\max(|S|, n))}+2$ bits and rounds values off to the nearest representable number.
	We denote the value resulting from rounding $x$ off in arithmetic $F$ by $\rdown{x}_F$.
	We assume that there is an extra bit that is used as a sign bit and that at least $\rdown{3\log(n)} + 1$ bits can be 
	used to represent integer and at least $\rdown{\log(n)} + 1$ bits can be used to represent fractional 
	parts. Note that this is a reasonable assumption for all common FA, like fixed-point or 
	floating-point arithmetic. Furthermore, it is clearly the case that $m \in 
	\mathcal{O}(\log(\max(|S|,n)))$. To ease our arguments and notation from here on, we assume w.l.o.g.\  that we represent $n$ using $\log(n)$ instead of $\rdown{log(n)} +1$.
	
	Per definition, $T_\mathcal{S}$ uses the embedding function 
	$\emb(a_k,0) = (1,1, 0, 0,  k)$ and $\emb(a_k,i) = (0,1, i, \sum_{j=0}^{i} j , k)$. First, we assume that each $k$,
	namely the value representing a specific tile from $S$, is a unique, positive value. This is possible as $F$ 
	uses $m > \log(|S|)$ bits. Furthermore, we see that $\emb$, especially the sum $ \sum_{j=0}^{i} j = \frac{i(i+1)}{2} 
	\leq i^2$, works as intended up to $i = n$ due to the fact that $F$ uses more than $m > 2\log(n)$ bits to represent integer parts. 
	Next, consider layer $l_1$ and $l_2$ of Lemma~\ref{sec:undec;lem:decode}. Layer $l_1$
	consists of a single attention head $\att_{1,1}$. Here, the only crucial parts are the computation of value $\frac{1}{l}$ in $\pool_{1,1}$ for a position $i$. Per definition,
	$l$ corresponds to the number of positions $j$ such that $\sum_{h=0}^{j} h \leq i-1$.
	As $i$ is bounded by $n$, this inequality can only be satisfied by positions $j$ for which
	$j \leq \sqrt{n}$ holds. As $T_{\mathcal{S}}$ uses $\hmax$ to count the positions for which
	this inequality holds, $l$ is bounded by $\sqrt{n}$. Next, we 
	observe that $\rdown{\frac{1}{l}}_F = \frac{\rdown{2^{\log(n)}\frac{1}{l}}}{2^{\log(n)}}= \frac{\rdown{\frac{n}{l}}}{n}$, namely 
	the general understanding of rounding off where
	we use $\log(n)$ bits to represent fractions. However, this
	gives that for all $1 \leq l_1 < l_2 \leq \sqrt{n}$ that
	$\rdown{\frac{1}{l_1}}_F \neq \rdown{\frac{1}{l_2}}_F$ as
	$\rdown{\frac{n}{l_1}} \neq \rdown{\frac{n}{l_2}}$
	holds for all $l_1 < l_2 \leq \sqrt{n}$. This means, that
	it is ensured by $F$ that $\frac{1}{l}$ is uniquely representable.
	
	Next, the only crucial part in $l_2$ is the computation of the product $\frac{1}{l} \cdot j$, which is used to determine
	the position $j$ for which $\frac{1}{l} \cdot j = 1$ in $\score_{2,1}$, which is obviously given by position $l$. 
	This equality is no longer guaranteed to
	exist if we consider $\rdown{\frac{1}{l}}_F \cdot j$. However,
	due to the monotonicity of $\rdown{\frac{1}{l}}_F$ for 
	$l \leq \sqrt{n}$ and that the maximum round of error is given
	by $\frac{1}{2^{\log(n)}}$, we have that the $j=l$ produces the
	value closest to $1$ in the product $\frac{1}{l} \cdot j$.
	Taking a look at $\score_{2,1}$, this ensures that 
	$l$ is still the position that $\att_{2,1}$ attends to.
	Therefore, the statement of Lemma~\ref{sec:undec;lem:decode} is 
	still valid for $T_{\mathcal{S}}$ working over $F$. 
	We observe that all values of some
	vector $\bs{x}^2_j$ after layer $l_2$ are positive integers whose magnitude is bounded 
	by $n^2$.
	
	Now, consider layer $l_3$ and $l_4$. From the proof of 
	Theorem~\ref{sec:undec;thm:undec} we see that the gadgets at most sum up
	two values or compute a fraction of the form $\frac{i+j}{2}$ and $\frac{i-j}{2}$ 
	(in gadgets $N_H$ or $N_V$). Both can safely be done with at least $3\log(n)$ bits for 
	integer and $\log(n)$ for fractional parts, as all previously computed values, up to layer 
	$l_2$, in a computation of $T_{\mathcal{S}}(w)$ are representable using $2\log(n)$ bits.
	We observe that the values of the third to seventh dimension of some $\bs{x}^3_j$ are either $0$ or $1$. This is due to the fact that all values after layer $l_2$
	are guaranteed to be integers.
	Next, consider layer $l_4$. The computation done by $\att_\leq$ is safe (see Lemma~\ref{app:undec_proofs;lem:leq}) and the crucial step here is the computation of $\comb_4$ given by $\relu(x_3 + \dotsb + x_7)$. The values $x_i$ are all of the form 
	$\frac{i}{j}$ where $i$ is guaranteed to be $0$ or $1$ and $j$ is the normalisation
	induced by $\att_\leq$ from perspective of position $j$. However, this means $j$ is bounded by $n$ and, thus,
	$\rdown{\frac{i}{j}}_F > 0$ if and only if $i=1$ for all $j$
	due to the fact that $F$ allows for $\log(n)$ bits
	to represent fractional parts.
	Finally, $\out$ is trivially computable in $F$, which finishes the proof.
\end{proof}

\section{Proofs of Section~\ref{sec:dec}}
\label{app:dec_proofs}

\begin{proof}[Proof of Theorem~\ref{sec:dec;thm:boundedword}]
	The decidability and membership results of statements (\ref{sec:dec;thm:boundedword;prop:NP}) and (\ref{sec:dec;thm:boundedword;prop:NEXPTIME} )are sufficiently 
	argued in the proof sketch given in Section~\ref{sec:dec}.
	
	To prove the hardness results of statements (\ref{sec:dec;thm:boundedword;prop:NP}) and (\ref{sec:dec;thm:boundedword;prop:NEXPTIME}),
	we establish a reduction from $\bOTWP_{\mathsf{un}}$ respectively 
	$\bOTWP_{\mathsf{bin}}$: given some bounded word-tiling
	instance $(\mathcal{S}, n)$ we build an instance $(T_{\mathcal{S}},n)$ of $\bempt_{\mathsf{un}}$ respectively $\bempt_{\mathsf{bin}}$ where
	$T_\mathcal{S}$ is build as described in Theorem~\ref{sec:undec;thm:undec}. The only missing argument is that these reductions are polynomial. 
	In particular, this means that $T_{\mathcal{S}}$ must be built in polynomial time regarding the size of $(\mathcal{S}, n)$. 
	Therefore, we recall the proof of Theorem~\ref{sec:undec;thm:undec}.
	
	First, we see that the embedding function $\emb$ and the amount of layers of $T_{\mathcal{S}}$ is independent of $\mathcal{S}$ and $n$. 
	The first two layers $l_1$ and $l_2$ of $T_{\mathcal{S}}$ are specified in Lemma~\ref{sec:undec;lem:decode}. 
	Recalling the proof of Lemma~\ref{sec:undec;lem:decode}, we see that $l_1$ and $l_2$ each consist of a single attention head, 
	whose internal parameters like scoring, pooling or combination are independent of $(\mathcal{S}, n)$ as well.  Next, consider layer $l_3$.
	This layer consists of three attention heads $\att_{\mathit{prev}}$, $\att_{\mathit{next}}$ and $\att_{\mathit{step}}$ each given by the 
	template described in Lemma~\ref{sec:undec;lem:lin}, which again is independent of $(\mathcal{S},n)$. Additionally, $l_3$ contains
	the combination function $\comb_3$. This combination function is represented by a FNN $N_3$, using smaller FNN 
	$N_a$, $N_{b_1}$, $N_{b_2}$, $N_c$ and $N_d$ as building blocks. These are dependent on $\mathcal{S}$, as they are built
	using gadgets $N_{=t_I}$, $N_{=t_F}$, $N_H$ and $N_V$ where $t_I$, $t_F$, $H$ and $V$ are components of 
	$\mathcal{S}$. However, in the proof of Lemma~\ref{app:undec_proofs;lem:tilegadgets} we see that these gadgets 
	are at most polynomial in their respective parameter.  Layer $l_4$ and the output function, specified by FNN $N_{\mathit{out}}$,
	are again independent of $(\mathcal{S},n)$. In summary, the TE $T_{\mathcal{S}}$ is polynomial in $(\mathcal{S},n)$,
	which makes the reductions from $\expbOTWP$ und $\polybOTWP$ polynomial.
\end{proof}

Next, we address the proof of Lemma~\ref{sec:decidability;lem:shortword_softmax}. We need some preliminary, rather technical result first. 
Let $T$ be an TE and $w \in \Sigma^+$ be a word and consider the computation $T(w)$. 
Let $X_{T(w)}^0 =\emb(w)$ and $X_{T(w)}^i$ be the sequence of vectors occurring after the computation of layer $l_i$ of $T$.
Let $\bs{x}$ and $\bs{x}'$ be two vectors matching the dimensionality of $\score_{i,j}$ of $T$.
Overloading some notation, let $N_{w}(\bs{x},\bs{x}',i,j) = \norm_{i,j}(\score_{i,j}(\bs{x}, \bs{x}'), \score_{i,j}(\bs{x}, X_{T(w)}^{i-1}))$ where $\score_{i,j}(\bs{x}, X_{T(w)}^{i-1})$ is the vector of all scorings of $\bs{x}$ with sequence 
$X_{T(w)}^{i-1}$. We remark that it is not necessary that $\bs{x}$ or $\bs{x}'$ must occur in $X_{T(w)}^{i-1}$ for this to be well defined.
Again overloading some notation, let $P_{w}(\bs{x},i,j) = \pool_{i, j}(X_{T(w)}^{i-1}, \score_{i,j}(\bs{x}, X_{T(w)}^{i-1}))$.
\begin{lemma}
	\label{app:dec;lem:cutout}
	Let $T$ be a additive-periodical TE of depth $L$, maximum width $H$ and periodicity $p$ with $\norm_{i,j} \in \{\softmax, \hmax\}$ for all
	$i \leq L, j \leq H$, let $w = u_1u_{j_1} \dotsb u_{j_h}u_2 \in \Sigma^+$ where 
	$u_1, u_2 \in \Sigma^+$, all $u_{j_i} \in \Sigma^p$ and all $u_{j_i}$ also occur in $u_1$ or $u_2$ and
	let $\mathcal{X}$ be the set of all vectors occurring in any of the sequences $X^i_{T(w)}$. 
	If there are  indexes 
	$h_1 < h_2 \leq h$ such that for all $\bs{x},\bs{x}' \in \mathcal{X}, i \leq L, j \leq H$ holds that 
	$N_{u_1u_{j_1} \dotsb u_{j_{h_1}}}(\bs{x},\bs{x}',i,j) = N_{u_1u_{j_1} \dotsb u_{j_{h_2}}}(\bs{x},\bs{x}',i,j)$ and $P_{u_1u_{j_1} \dotsb u_{j_{h_1}}}(\bs{x},i,j) = P_{u_1u_{j_1} \dotsb u_{j_{h_2}}}(\bs{x},i,j)$ 
	then it holds that $N_{u_1u_{j_1} \dotsb u_{j_{h_1}} u_{j_{h_2+1}} \dotsb u_2}(\bs{x},\bs{x}',i,j) = N_{u_1 \dotsb u_2}(\bs{x},\bs{x}',i,j)$ and $P_{u_1u_{j_1} \dotsb u_{j_{h_1}} u_{j_{h_2+1}} \dotsb u_2}(\bs{x},i,j) = P_{u_1 \dotsb u_2}(\bs{x},i,j)$.
\end{lemma}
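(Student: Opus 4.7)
Let $w' := u_1 u_{j_1} \cdots u_{j_{h_1}} u_{j_{h_2+1}} \cdots u_{j_h} u_2$ denote the word obtained by excising the middle blocks, and set $\Delta := (h_2 - h_1)\,p$, the length of the excised subword. The plan is to show, by induction on the layer index $i$, that $X^i_{T(w')}$ coincides with $X^i_{T(w)}$ at every retained position, using the natural correspondence that positions in $u_1 u_{j_1} \cdots u_{j_{h_1}}$ are unshifted while those in $u_{j_{h_2+1}} \cdots u_{j_h} u_2$ are shifted by $\Delta$. Both equalities claimed by the lemma would then follow immediately, because $N_w$ and $N_{w'}$ (resp.\ $P_w$ and $P_{w'}$) are determined by the corresponding layer-$(i-1)$ sequences together with $\bs{x}$ and $\bs{x}'$.

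For the base case $i = 0$, I would invoke additive-periodicity: writing $\emb(a, m) = \emb'(a) + \pos(m)$ with $\pos(m) = \pos(m + p)$, every retained position is shifted by $0$ or by $\Delta$, which is a multiple of $p$, so $\pos$ is unchanged; the symbolic component $\emb'$ is trivially unchanged since the symbols match. For the inductive step at layer $i$, I would fix a head $\att_{i,j}$ together with a retained query position; by the induction hypothesis the query vector $\bs{x}$ at layer $i-1$ is identical in the two computations, and so are the value vectors at all retained key positions. The sequence for $w$ contains additional vectors from the excised blocks, but by the assumption that each $u_{j_k}$ occurs in $u_1$ or $u_2$ (combined with the induction hypothesis at layer $i-1$), every such extra vector already belongs to $\mathcal{X}$. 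I would then instantiate the hypothesis on $N$ and $P$ — with query $\bs{x}$ and each such $\bs{x}'$ in turn — to conclude that inserting the excised blocks leaves the normalization weights on retained positions and the pooling value unchanged. Applying $\comb_i$ to the (unchanged) residual input and the (unchanged) head outputs then yields equality at layer $i$.

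The main subtlety, and the hardest part of the argument, is bridging the gap between the hypothesis, which compares $N$ and $P$ for two \emph{prefix} computations at checkpoints $h_1$ and $h_2$, and the required statement about the \emph{full} computations on $w$ and $w'$. For $\hmax$ this translation is relatively clean: the hypothesis forces the excised blocks to contribute no new maximum-score positions for any query in $\mathcal{X}$, so their hardmax weights remain zero in the full word and the weights on retained positions are unaffected. For $\softmax$ the situation is more delicate, since every additional score generally perturbs the denominator; here the proof must lean on the fixed-width arithmetic environment, so that quantization actually rounds away the bounded contribution of the excised blocks in exactly the manner that the hypothesis guarantees. Carefully propagating this rounding-based cancellation through the layers, while simultaneously using additive-periodicity to ensure that the positional shifts introduced by the excision do not reappear in the scoring, is the principal bookkeeping burden of the proof.
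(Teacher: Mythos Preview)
Your overall architecture---induction on the layer index, with additive-periodicity furnishing the base case and the ``every $u_{j_k}$ occurs in $u_1$ or $u_2$'' assumption guaranteeing that the set of layer-$(i-1)$ vectors is preserved---is exactly the paper's approach. The divergence is in how you bridge the prefix hypothesis to the full-word conclusion in the $\softmax$ case.

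You write that for $\softmax$ ``the proof must lean on the fixed-width arithmetic environment, so that quantization actually rounds away the bounded contribution of the excised blocks''. The paper never reasons about quantization inside this lemma. Instead it observes that, for a fixed query $\bs{x}$, the value $N_v(\bs{x},\bs{x}',i,j)$ has numerator $e^{\score_{i,j}(\bs{x},\bs{x}')}$ independently of the word $v$; hence the hypothesis $N_{\text{prefix}_{h_1}}=N_{\text{prefix}_{h_2}}$ immediately forces equality of the two prefix partition functions. One then decomposes the full denominators additively: the tail contributions (over $u_{j_{h_2+1}}\cdots u_2$) coincide for $w$ and $w'$ by additive-periodicity together with the induction hypothesis, and the remaining prefix contributions are equal by what was just derived. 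So the ``principal bookkeeping burden'' you anticipate dissolves into a one-line additivity argument; no rounding analysis is needed. You are of course right that in exact arithmetic the hypothesis is vacuous for $\softmax$ when $h_1<h_2$, but the lemma is a conditional and the paper proves the implication without touching the arithmetic model.

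A smaller point on $\hmax$: you phrase the argument as ``the hypothesis forces the excised blocks to contribute no new maximum-score positions''. The paper separates this into two pieces. First, the structural assumption that each $u_{j_k}$ reappears in $u_1$ or $u_2$ (plus the induction hypothesis) shows the \emph{sets} of layer-$(i-1)$ vectors for $w$ and $w'$ coincide, which already pins down the maximum score and hence the numerator indicator $f$. Second, only the denominator---the \emph{count} of maximal positions---actually requires the hypothesis, and there the same prefix-plus-tail decomposition applies.
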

\begin{proof}
	Let $T$, $w$, $\mathcal{X}$, $h_1$ and $h_2$ be as stated above.
	We prove the statement via induction on the layers $l_i$. First,
	consider layer $l_1$ and fix some tuple $(\bs{x}, \bs{x}', 1, j)$.
	We first show that $N_{u_1u_{j_1} \dotsb u_{j_{h_1}} u_{j_{h_2+1}} \dotsb u_2}(\bs{x},\bs{x}',1,j) = N_{u_1 \dotsb u_2}(\bs{x},\bs{x}',1,j)$. 
	Assume that $\norm_{1,j}$ is given by $\softmax$. Then, $\norm_{1,j}$
	computes $\frac{e^{\score_{1,j}(\bs{x}, \bs{x}')}}{\sum_{\score_{1,j}(\bs{x}, X_{T(w')}^{0})} e^{s_{i'}}}$ for all words $w'$. 
	Obviously, the numerator in $N_{u_1 \dotsb u_{h_1}u_{h_2+1} \dotsb u_2}(\bs{x},\bs{x}',1,j)$ and $N_{u_1 \dotsb u_2}(\bs{x}, \bs{x}', 1,j)$ is equal. 
	By definition, we have that $\score_{i,j}$ is local in the sense that it
	compares vectors pairwise, producing the different scoring values 
	$s_{i'}$ independent of the overall word. Furthermore, due to the 
	fact that $\emb$ is additive-periodical, we have $X^0_{T(u_1u_{j_1} \dotsb u_{j_{h_1}} u_{j_{h_2+1}} \dotsb u_2)}$ and $X^0_{T(u_1 \dotsb u_2)}$ are equal in the sense that the vectors corresponding to 
	$u_{j_{h_2+1}} \dotsb u_2$ are equal. We refer to this property (*) later on. Using these observations 
	and that $N_{u_1u_{j_1} \dotsb u_{j_{h_1}}}(\bs{x},\bs{x}',1,j) = 
	N_{u_1u_{j_1} \dotsb u_{j_{h_2}}}(\bs{x},\bs{x}',1,j)$, we have that 
	the denominator is equal as well. Now, assume that $\norm_{1,j}$ 
	is given by $\hmax$. Then, 
	$\norm_{1,j}$ computes $\frac{f(\score_{1,j}(\bs{x}, \bs{x}'), \score_{1,j}(\bs{x}, X^0_{T(w')}))}{\sum_{\score_{1,j}(\bs{x}, X^{0}_{T(w')})} f(s_{i'}, \score_{1,j}(\bs{x}, X^{0}_{T(w')}))}$
	where $f(s,S) = 1$ if $s$ is maximal in $S$ and $0$ otherwise for any word $w'$.
	In contrast to $\softmax$, we have that the values of $f(\dotsb)$ are dependent of the overall context, namely the vector of all scorings
	$\score_{1,j}(\bs{x}, X^{0}_{T(w')})$. Compare $X^0_{T(u_1u_{j_1} \dotsb u_{j_{h_1}} u_{j_{h_2+1}} \dotsb u_2)}$ and $X^0_{T(u_1 \dotsb u_2)}$,
	both given by the additive-periodical embedding $\emb$. Via assumption,
	we have that each $u_{j_i}$ block also occurs in $u_1$ or $u_2$.
	In particular, this means every vector that occurs in $\emb(u_1 \dotsb u_2)$ does also occur in $\emb(u_1u_{j_1} \dotsb u_{j_{h_1}} u_{j_{h_2+1}} \dotsb u_2)$ and vice-versa.
	This implies that $f(\score_{1,j}(\bs{x}, \bs{x}'), \score_{1,j}(\bs{x}, X^0_{T(u_1u_{j_1} \dotsb u_{j_{h_1}} u_{j_{h_2+1}} \dotsb u_2)}))
	= f(\score_{1,j}(\bs{x}, \bs{x}'), \score_{1,j}(\bs{x}, X^0_{T(u_1 \dotsb u_2)}))$ for any scoring value $\score_{1,j}(\bs{x}, \bs{x}')$. 
	In combination with the assumption that $N_{u_1u_{j_1} \dotsb u_{j_{h_1}}}(\bs{x},\bs{x}',1,j) = N_{u_1u_{j_1} \dotsb u_{j_{h_2}}}(\bs{x},\bs{x}',1,j)$ and the observations above, we also get $N_{u_1u_{j_1} \dotsb u_{j_{h_1}} u_{j_{h_2+1}} \dotsb u_2}(\bs{x},\bs{x}',1,j) = N_{u_1 \dotsb u_2}(\bs{x},\bs{x}',1,j)$ in the $\hmax$ case.
	Next, consider the pooling functions. By definition, we have that $\pool_{1, j}(X^{0}_{T(w')}, \score_{1,j}(\bs{x}, X^{0}_{T(w')}))$ computes $\sum_{X^{0}_{T(w')}} \norm_{1,j}(\bs{x}, \bs{x}_{i'}, \score_{i,j}(\bs{x}, X^{0}_{T(w')})) (W\bs{x}_{i'})$ for any word $w'$. 
	Our previous arguments give that $N_{u_1u_{j_1} \dotsb u_{j_{h_1}} u_{j_{h_2+1}} \dotsb u_2}(\bs{x},\bs{x}',1,j) = N_{u_1 \dotsb u_2}(\bs{x},\bs{x}',1,j)$. In combination with 
	$P_{u_1u_{j_1} \dotsb u_{j_{h_1}}}(\bs{x},i,j) = P_{u_1u_{j_1} \dotsb u_{j_{h_2}}}(\bs{x},i,j)$ and (*), we immediately get that 
	$P_{u_1u_{j_1} \dotsb u_{j_{h_1}} u_{j_{h_2+1}} \dotsb u_2}(\bs{x},i,j) = P_{u_1 \dotsb u_2}(\bs{x},i,j)$ holds as well. 
	Next, consider layer $l_i$. The arguments are exactly the same as
	in the base case. However, we need to rely on the induction hypothesis. 
	Namely, we assume that all $\pool_{{i-1}, j}$ produce the same output in computation $T(u_1u_{j_1} \dotsb u_{j_{h_1}} u_{j_{h_2+1}} \dotsb u_2)$ and computation $T(u_1 \dotsb u_2)$. 
	This implies that all vectors present in $X^{i-1}_{T(u_1u_{j_1} \dotsb u_{j_{h_1}} u_{j_{h_2+1}} \dotsb u_2)}$ are also present in $X^{i-1}_{T(u_1 \dotsb u_2)}$ and vice-versa 
	and that the vectors corresponding to $u_{j_{h_2+1}} \dotsb u_2$ are equal in both computations.
\end{proof}

\begin{proof}[Proof of Lemma~\ref{sec:decidability;lem:shortword_softmax}]
	Let $T \in \periodTransCfix$ be an additive-periodical TE working over alphabet 
	$\Sigma$, having periodicity $p$, depth $L$, maximum width $H$, maximum dimensionality 
	$D$ and working over an FA $F$ using $b$ bits for binary encoding. 
	We use $V$ to denote the set of values representable in the fixed arithmetic that $T$ works over.
	Note that $|V| \leq 2^b$.
	Let $w \in \Sigma^+$ be a word such that $T(w) = 1$. We observe that there is $m \in \nats$ such
	that $w = u_1 \dotsb u_m u$ where $u_i \in \Sigma^p$ are blocks of symbols of length $p$ and $u \in \Sigma^{\leq p}$. 
	Our goal is to prove that a not necessarily connected subsequence of at most $2^{(|T|)^6}$ many 
	$p$-blocks $u_i$ from $u_1 \dotsb u_m$ is sufficient to ensure the same computation of $T$.
	In the case that $pm + p \leq 2^{(|T|)^6}$ we are done. Therefore, assume that $m > 2^{(|T|)^6}$.
	
	Let $U$ be the set of all unique $u_i$. We observe that $|U| \leq |\Sigma|^p$.
	Next, we fix some not necessarily connected but ordered subsequence $S = u_{j_0} u_{j_1} \dotsb u_{j_n} u_{j_{n+1}}$ 
	with $u_{j_0}=u_1 $, $j_i \in \{2, \dotsc, m\}$ and $u_{j_{n+1}} = u$ of $w$ such that each $u' \in U$ occurs exactly once.
	For the case that $u_1 = u$ we allow this specific block to occur twice in  $S$. 
	The assumption $m > 2^{\poly(|T|)}$ implies that $S \neq w$. This means that there are pairs 
	$(u_{j_h}, u_{j_{h+1}})$ in $S$ with some non-empty sequence of $p$-blocks $u_{j'_1} \dotsb u_{j'_l}$ in between.
	W.lo.g.\ assume $u_{j_0}$ and $u_{j_1}$ is such a pair. Our goal is to argue that there are at most $2^{(|T|)^5}$
	blocks from $u_{j'_1} \dotsb u_{j'_l}$ needed to ensure the same computation of $T$. Given that this argument works for all $|\Sigma|^p$ adjacent pairs in $S$, we are done. 
	
	Consider the computation $T(w)$. The additive-periodical embedding $\emb$ of $T$ implies that $\emb(w)$ includes
	at most $\Sigma p$ different vectors. Furthermore, from layer to layer equal vectors are mapped equally, which means
	that each $X_w^1, \dotsc , X_w^L$ contains at most $\Sigma p$ different vectors as well. This implies that the
	computation $T(w)$ induces at most $(L\Sigma p)^2 \times L \times H \leq (\Sigma p L^2 H)^2 \leq (\Sigma pLH)^4$ 
	different tuples $(\bs{x}, \bs{x}', i, j)$ where $\bs{x}, \bs{x}'$
	are vectors induced by $T(w)$ and $i \leq L, j \leq H$.
	Additionally, we have that for each value $N_{w}(\bs{x},\bs{x}',i,j)$ and $P_{w}(\bs{x},i,j)$, as defined in the beginning of this section, there are at most 
	$|V^D| \leq 2^{bD}$ possibilities. Simple combinatorics, namely the pigeon hole principle, states that in the 
	increasing sequence $u_{j'_1}, u_{j'_2}, \dots$ there must be points $h_1$ and $h_2$ with 
	$h_1 \leq 2^{bD(\Sigma pLH)^4} \leq 2^{(|T|)^5}$ such that for all tuples
	$(\bs{x}, \bs{x}', i, j)$ induced by $T(w)$ we have that $N_{u_{j_0}u_{j'_1} \dotsb u_{j'_{h_1}}}(\bs{x},\bs{x}',i,j) = N_{u_{j_0}u_{j'_1} \dotsb u_{j'_{h_2}}}(\bs{x},\bs{x}',i,j)$ and $P_{u_{j_0}u_{j'_1} \dotsb u_{j'_{h_1}}}(\bs{x},i,j) = P_{u_{j_0}u_{j'_1} \dotsb u_{j'_{h_2}}}(\bs{x}, i,j)$.
	Now, Lemma~\ref{app:dec;lem:cutout} states that this implies 
	$N_{u_{j_0}u_{j'_1} \dotsb u_{j'_{h_1}}u_{j'_{h_2+1}} \dotsb u_{j_1} \dotsb u}(\bs{x},\bs{x}',i,j) = N_{w}(\bs{x},\bs{x}',i,j)$ 
	and $P_{u_{j_0}u_{j'_1} \dotsb u_{j'_{h_1}}u_{j'_{h_2+1}} \dotsb u_{j_1} \dotsb u}(\bs{x},i,j) = P_{w}(\bs{x}, i,j)$.
	However, this implies that the subsequence 
	$u_{j'_{h_1+1}} \dotsb u_{j'_{h_2}}$ has no influence in the computation of $T$ on $w$ and, thus, can be left out. 
	As we can argue this for every such cycle occurring in
	$u_{j'_1} \dotsb u_{j'_l}$, we get the desired bound of 
	$2^{(|T|)^5}$.
\end{proof}

\begin{proof}[Proof of Theorem~\ref{sec:decidability;thm:nexptimehard}]
	First, we argue the decidability of $\empt[\transCfix]$. Assume that $T \in \transCfix$ 
	with an arbitrary embedding $\emb$ is given that operates in a fixed-width arithmetic using $b$ bits 
	for representing numbers and wrap-around to handle overflow. 
    Then, $\emb$ is periodic with periodicity $p \leq 2^b$, simply due to the fact that positions
    $i$ in some word $w$ can only be exactly represented up to magnitude $2^b$. Therefore, the same
    arguments as used in Theorem~\ref{sec:decidability;thm:nexptime} apply here. Note that this
	does not imply \NEXPTIME-membership of $\empt[\transCfix]$, due to the fact that the period is
    exponential in $b$. Analogously, in a saturating scenario, we have that $\emb$ has a finite
    prefix of length at most $2^b$ and is periodic with periodicity $1$ afterwards. Here, the small-word
	property used in Theorem~\ref{sec:decidability;thm:nexptime} follows the same line of reasoning, 
	with the difference that either the finite prefix is sufficient as a witness, or the finite prefix
    followed by an exponentially bounded suffix, whose existence follows from the same arguments as in
	Lemma~\ref{sec:decidability;lem:shortword_softmax} with periodicity $p=1$.

	Second, we argue the \NEXPTIME-hardness. We prove the statement via reduction from $\bOTWP_{\mathsf{bin}}$.  
	Let $\mathcal{S} = (S,H,V,t_I,t_F)$ and $n \geq 1$ be an instance of $\bOTWP_{\mathsf{bin}}$.
	We construct an TE $T_{\mathcal{S},n} \in \transCfix$ working over some FA $F$ with 
	$T_{\mathcal{S},n}(w) = 1$ if and only if $w \in S^+$ witnesses the validity of the $\bOTWP_{\mathsf{bin}}$ 
	instance $(\mathcal{S}, n)$.  
	
	Next, let $T_{\mathcal{S},n}$ be built exactly like $T_{\mathcal{S}}$ in the proof of 
	Theorem~\ref{sec:undec}, but with the following structural adjustments.
	In layer $l_3$ we adjust $\comb_3$ to be
	$\comb_3 = N_3 \para N_e \para N_f$ where $N_3$ is specified
	as in the proof of Theorem~\ref{sec:undec},  $N_e = N_\rightarrow \circ (N^{x_{1,2}, x_{3,2}}_= \para N^{x_{1,3}}_{=n})$ and $N_f = N^{x_{1,2}}_{\neq \frac{(n+1)((n+1)+1)}{2}+1}$ where
	$N_{\neq t}$ is analogous to the construction of 
	$N_{= t}$ given in 
	Lemma~\ref{app:undec_proofs;lem:tilegadgets}. 
	Furthermore, we adjust $\comb_4$ in layer $l_4$ to be represented by the FNN $\relu(x_3 + \dotsb + x_8 + x_9)$. 
	We refer to the gadgets described in Lemma~\ref{app:undec_proofs;lem:gadgets} and Lemma~\ref{app:undec_proofs;lem:tilegadgets} as well as the proof of Theorem~\ref{sec:undec;thm:undec}
	for further details. 
	
	Consider the adjustment in $l_3$. FNN $N_e$ in $\comb_3$ ensures that 
	$T_{\mathcal{S},n}(w)=1$ only if the row index corresponding to 
	the  last symbol is equal to $n$. Note that $N_3$ checks whether 
	row and column index corresponding to the last symbol are equal. 
	Additionally, $N_f$ checks if there is no id
	equal to $\frac{(n+1)((n+1)+1)}{2}+1$. This corresponds
	to the position id of the successor of the vector representing tile $(n,n)$.
	Furthermore, the adjustment of $\comb_4$ considers the output of $N_e$ and $N_f$ in addition to the outputs of 
	$N_3$. In summary, we have that $T_{\mathcal{S},n}$ only outputs $1$ given $w$ if the word length is such that 
	the row index corresponding to the position of the last symbol of $w$ in a respective octant tiling is equal to $n$ 
	(ensured by $N_e$), that $w$ is at most of length $\frac{(n+1)((n+1)+1)}{2}$ (ensured by $N_f$) and if $w$ represents a valid encoded tiling (the remaining parts of $T_{\mathcal{S},n})$.
	
	Additionally, we need to argue that $T_{\mathcal{S},n}$ works as intended,
	despite the fact that it is limited by some FA $F$ using a representation size that is at most logarithmic in $n$. 
	These arguments follow the exact same line as in the proof of Theorem~\ref{sec:undec;thm:logundec}, but using FA $F$ that uses 
	$m = \rdown{6\log(\max(|S|, n))}+2$ bits and handles overflow using saturation. The reason for the larger representation size is that words $w$ 
	representing a valid encoded tiling ending at position $(n,n)$ are of length $|w| = \frac{(n+1)((n+1)+1)}{2} \leq n^2$. 
	Thus, we use $\rdown{4\log(n)}+1$ integer bits to be able to represent a sum $\sum_{j=0}^{i} j = \frac{i(i+1)}{2} \leq i^2$ for all 
	$i \leq n^2$ and $\rdown{2\log(n)}+1$ fractional bits to uniquely represent fraction $\frac{1}{l}$ for $l \leq n$. For detail see the proof
	of Theorem~\ref{sec:undec;thm:logundec}. Furthermore, the fact that we use $\rdown{4\log(n)}+1$ bits to encode integers and 
	that $F$ handles overflow using saturation ensures that $N_f$ works as intended: we have that $\frac{(n+1)((n+1)+1)}{2} + 1 < n^4$
	and, thus, we have that the id $\frac{(n+1)((n+1)+1)}{2}+1$ occurs at most once, independent of the length of $w$ as it is not
	the point where $F$ enforces saturation on the positional embedding. 
	Thus, $\att_{\text{self}}$ works for this position as intended and then $N_f$ checks the property described above correctly.
	
	The argument that $T_{\mathcal{S},n}$ can be built in polynomial time is a straightforward implication from the arguments for Theorem~\ref{sec:dec;thm:boundedword} and the fact that $N_e$ and $N_f$ are a small gadgets with maximum parameter quadratic in $n$, which can be represented using a logarithmic amount of bits.
\end{proof}

\end{document}